\newcommand{\ep}{\epsilon}
\newcommand{\ei}{\varepsilon}
\newcommand{\bep}{\bm \ei}
\newcommand{\epo}{\widetilde{\ei}}
\newcommand{\bepo}{\widetilde{\bep}}
\newcommand{\tb}{\bm \tau}
\newcommand{\BH}{\bm H}
\newcommand{\om}{\omega}
\newcommand{\br}{\bm r}
\newcommand{\de}{\partial}
\newcommand{\te}{\theta}
\newcommand{\Be}{\bm q}
\newcommand{\BE}{\bm E}
\newcommand{\BD}{\bm D}
\newcommand{\BI}{\bm I}
\newcommand{\la}{\left \langle}
\newcommand{\ra}{\right \rangle}
\newcommand{\BEA}{\la {\BE} \ra}
\newcommand{\BDA}{\la {\BD} \ra}
\newcommand{\bet}{\hat{\bm q}}
\newcommand{\ui}{u^{in}}
\newcommand{\ue}{u^{ex}}
\newcommand{\ut}{\widetilde{u}}
\newcommand{\vt}{\widetilde{v}}
\newcommand{\re}{\mbox{Re\,}}
\newcommand{\im}{\mbox{Im\,}}
\newcommand{\rf}[1]{(\ref{#1})}
\newcommand{\dst}{\displaystyle}
\newcommand{\zb}{\bar{z}}
\renewcommand{\P}{P_{m,n}}
\newcommand{\nonu}{\nonumber}
\newcommand{\np}{\nabla^\perp}
\renewcommand{\leq}{\leqslant}
\renewcommand{\geq}{\geqslant}
\newcommand{\s}{\mathcal S}
\newcommand{\si}{\s_{in}}
\newcommand{\se}{\s_{ex}}
\newcommand{\ds}{\de \s}
\newcommand{\dsi}{\de \si}
\newcommand{\dse}{\de \se}
\newcommand{\etat}{\tilde{\eta}_2}
\DeclareMathSymbol{\Zeta}{\mathalpha}{operators}{"5A}
\renewcommand{\theequation}{\thesection.\arabic{equation}}
\newtheorem{theorem}{Theorem}
\newtheorem{lemma}{Lemma}
\theoremstyle{remark}
\newcommand{\I}{\ensuremath{\mathrm{i}}}
\newcommand{\E}{\ensuremath{\mathrm{e}}}
\newcommand{\D}{\ensuremath{\mathrm{d}}}
\begin{document}

\title{Dispersive and effective properties of two-dimensional periodic media\\}
\author{Yuri A. Godin\thanks{Department of Mathematics and Statistics, University of North Carolina at Charlotte, 
Charlotte, NC 28223 USA. E-mail: ygodin@uncc.edu}~ and Boris Vainberg\thanks{Department of Mathematics and Statistics, 
University of North Carolina at Charlotte, 
Charlotte, NC 28223 USA. Email: brvainbe@uncc.edu}}

\date{}
\maketitle

\begin{abstract}
We consider transverse propagation of electromagnetic waves through a two-dimensional composite material containing
a periodic rectangular array of circular cylinders. Propagation of waves is described by the Helmholtz equation with
the continuity conditions for the tangential components of the electric and magnetic fields on the boundaries of the cylinders.
We assume that the cell size is small compared to the wavelength, but large compared to the radius $a$ of the inclusions.
Explicit formulas are obtained for asymptotic expansion of the solution of the problem in terms of the dimensionless magnitude $q$ of the wave vector
and radius $a$. This leads to explicit formulas for the effective dielectric tensor and the dispersion relation with the rigorously
justified error of order $O((q^2+a^2)^{5/2})$.
\end{abstract}

\section{Introduction}
\setcounter{equation}{0}
Periodic media have attracted a great deal of attention due to the possibility of manipulating the dispersion relation.
In the case of electromagnetic waves, such media known as photonic crystals \cite{JJWM:11} exhibit strong anisotropy of wave
propagation including its total suppression \cite{FGV:98, Kosaka:99}, nonreciprocal wave transmission \cite{FV:01},
slow light \cite{FV:06,MV:04,Kraus:07}, superlensing, \cite{LJJP:03} and more. The advent of metamaterials
has allowed for the engineering of new tunable and switchable devices on the length scale \cite{ZK:12}.

In this paper we study the propagation of waves in a doubly periodic array of scatterers. The multipole expansion method introduce in \cite{R:92}
was applied to the propagation of electromagnetic waves in a doubly periodic lattice in \cite{McPhedran:1996}, while in \cite{Zalipaev:02}
this approach was employed in the problem of elastic wave propagation in a two-dimensional solid containing a doubly periodic array
of circular holes.
Using the  method of matched asymptotic expansion, a dispersion relation was obtained
in \cite{McIver:07} for a doubly periodic
array of small rigid scatterers and in \cite{McIver:2011} for elastic waves in a lattice of cylindrical cavities. Application
of the method to the scatterers with homogeneous Dirichlet boundary conditions was considered in \cite{McIver:2009}
and \cite{Craster:2017}.
A rigorous analysis of a sub-wavelength plasmonic crystal was presented in \cite{Lipton:2010}, where solution of
a nonlinear eigenvalue problem is given in terms of convergent high-contrast power series for the electromagnetic fields and the first branch
of the dispersion relation.

We consider transverse propagation of electromagnetic waves through a two-dimensional composite material
containing a periodic rectangular array $\mathcal C$ of circular cylinders with a positive finite dielectric constant $\ei$. The periods of the lattice
$\tb_1$ and $\tb_2$ are normalized in such a way that $\ell = \min \{|\tb_1|, |\tb_2|\}=1$, while the radius
of the cylinders $a < 0.5$  (see Figure 1). We assume that the relative magnetic permeability of the cylinders
and the matrix equals unity.
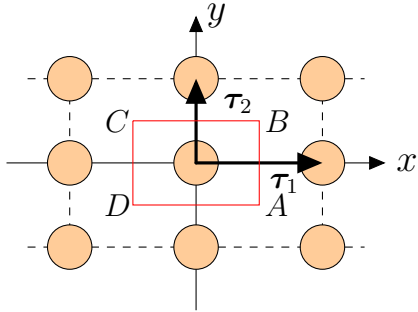
\begin{figure}[H]
\begin{center}
\begin{tikzpicture}[>=triangle 45,scale=0.56]

\draw [->] (-4.5,0) -- (4.5,0)  node[right] {\large $x$};
\draw [->] (0,-3.5) -- (0,3.5)  node[right] {\large $y$};

\draw [dashed] (-3,-2) -- (-3,2);
\draw [dashed] (3,-2) -- (3,2);

\draw [dashed] (-4.0,2) -- (4.0,2);
\draw [dashed] (-4.0,-2) -- (4.0,-2);

\foreach \x in {-3,0,3}
        \draw[fill=orange!40, opacity=1.0,thin] (\x,0)  circle (15pt);
\foreach \x in {-3,0,3}
         \draw[fill=orange!40, opacity=1.0,thin] (\x ,2)  circle (15pt);
\foreach \x in {-3,0,3}
         \draw[fill=orange!40, opacity=1.0,thin] (\x ,-2)  circle (15pt);
\draw [->,very thick,cap=round] (0,0) -- (0,2);
\node [below] at (1.0,1.90) { $\bm{\tau}_2$};
\draw [->,very thick,cap=round] (0,0) -- (3,0);
\node [below] at (2.1,-0.00) {$\bm{\tau}_1$};

\draw [red] ((-1.5,1) --(1.5,1) -- (1.5,-1) -- (-1.5,-1) -- (-1.5,1);

\node [right] at (1.5,-1) {$\!A$};
\node [right] at (1.5,1) {$\!B$};
\node  at (-1.5,1) {$C\quad$};
\node  at (-1.5,-1) {$D\quad$};

\end{tikzpicture}
\label{fig:1}
\caption{Geometry of rectangular lattice of cylinders and the fundamental cell $ABCD$.}
\end{center}
\end{figure}

In dimensionless variables, propagation of the TE mode $\BH = (0,0,u)$ in the $xy$-plane
 is described by the equation
\begin{equation}
 -\nabla \cdot \left(\ep^{-1} \nabla u(\br)\right) = \nu^2\, u(\br), \quad \br\notin \partial{\cal C}
 \label{Hz}
\end{equation}
where
\begin{equation}
 u = \left\{
 \begin{array}{cc}
  u_{in}, & \br \in {\cal C}, \\[2mm]
  u_{ex}, & \br \notin {\cal C},
 \end{array}
\right. \quad
\ep = \left\{
 \begin{array}{ll}
  \ei, & \br \in {\cal C}, \\[2mm]
  1, & \br \notin {\cal C},
 \end{array}
\right.
\end{equation}
${\br} = (x,y)$, $\nu = \dfrac{\om}{c} \ll 1$, where $c$ is the speed of light in vacuum, $\omega$ is the
frequency of the incident wave and $\nu$ is normalized by the condition $\ell=1
$. On the cylinders boundary $\partial\mathcal C$ we impose
continuity conditions of the tangential components of $\BH (\br )$ and $\BE (\br )$
\begin{align}
\label{bc1}
\left. \left\llbracket u (\br) \right\rrbracket \right . &=0, \\[2mm]
\left. \left\llbracket \frac{1}{\ep} \frac{\de u (\br)}{\de n} \right\rrbracket \right. &=0.
\label{bc2}
\end{align}
Hereafter, brackets $\llbracket \cdot \rrbracket$ denote the jump  of the enclosed quantity across
the interface of the cylinders. In addition, $u(\br)$ must satisfy the Floquet-Bloch condition
\begin{equation}
 u(\br + \tb) = \E^{\I\Be \cdot \tb} u(\br),
\label{FB}
\end{equation}
where $\tb$ is any of the lattice periods, $\Be=(q_x, q_y)= q \bet$ runs the primitive cell of the dual lattice
 with $\bet = (\cos \te, \sin \te)$ being the unit vector. This condition implies that the function
$\E^{-\I\Be \cdot \br} u(\br)$ is periodic over the fundamental cell $ABCD$ that we symbolically write as
\begin{equation}
 \rrbracket  \E^{-\I\Be \cdot \br} u(\br) \llbracket =0.
\label{FB1}
\end{equation}
Subsequently, the inverted brackets $\rrbracket  \cdot \llbracket$ denote the jump of the enclosed expression and their
first derivatives across the opposite sides of the cells of periodicity.

The outline of the paper is the following.
In Section 2 we formulate main results. In Section 3
we seek the solution of the problem as a power series in terms of the absolute value of the quasimomentum $q$, and then derive
recurrence relations between {\bf the coefficients $u_k$ of the power series. We prove} that the coefficients are odd
functions in space variables if $k$ is odd, and otherwise even. We also prove that the power series for the eigenvalues
contains only even powers of $q$. In Section 4 we obtain explicit formulas for the coefficients of the series with
a given accuracy in terms of the radius $a$ of the cylinders. Explicit approximations of the effective tensor with the
accuracy $O((q^2+a^2)^{5/2})$ and the dispersion relation with the accuracy $O(q^2(q^2+a^2)^{2})$ are obtained in
Section 5. To prove these results we show in Appendix that the power series in $q$ of the solution $u$ and of the
eigenfrequency converge uniformly in $a$.

We assume that the dielectric constant $\ep$ is positive and fixed (does not depend on $a$ and $q$) while $a$ is small. The case
of lossy composites and metamaterials will be considered elsewhere. Let us stress again that our goal is explicit
formulas with high accuracy and rigorous estimates of the remainders in a multidimensional setting. There are many
papers where similar problems were often solved under more general assumptions, but with less demanding goals.
See, for example, \cite{Zalipaev:02,Meng:18,Craster:10,Vanel:17,Cherednichenko:06,Smyshlyaev:08,Joyce:17}.

\section{Formulation of the problem and the main result}
\setcounter{equation}{0}

We reduce the above problem to the fundamental cell $\cal S$ centered at the origin:
\begin{equation}
 -\frac{1}{\ep}\,\Delta u = \nu^2 u, \quad \br\in\mathcal S, \quad r=|\br|\neq a,
\label{Hz1}
\end{equation}
\begin{align}
\label{bc1a}
\left\llbracket u (\br) \right\rrbracket =0, \quad
\left\llbracket \frac{1}{\ep} \frac{\de u (\br)}{\de n} \right\rrbracket =0, \quad
\rrbracket  \E^{-\I\Be \cdot \br} u(\br) \llbracket =0 ~~{\rm on} ~~ \partial\mathcal S.
\end{align}

The main result of the paper concerns approximation of the effective dielectric tensor $\bep^\ast$ defined
from $\BDA = \bep^\ast \BEA$, where
$\dst \BEA = \frac{\I c}{\omega}\,\la \frac{1}{\ep} \nabla \times (0,0,u) \ra$
is the average electric field and $\BDA = \la \ep \BE \ra$
is the average electric displacement. Here $u$ is the $z$-component of the magnetic field: $\BH=(0,0,u)$.
It states that in the low frequency regime with small inclusions when $q^2 + a^2 \ll 1,$ we have
\begin{align}
\label{bep}
 \bep^\ast &= \left(1 + \frac{2\pi \alpha a^2}{\tau_1 \tau_2}
 +\frac{1}{12}\, \frac{\pi \alpha a^2 q^2}{\tau_1 \tau_2} \left( \tau_1^2 \cos^2 \te
 + \tau_2^2 \sin^2 \te \right) \right) \BI \nonu \\[2mm]
 &+ \frac{4\pi \alpha^2 a^4}{\tau_1^2 \tau_2^2 } \left[
\begin{array}{cc}
 \eta_1 \tau_2 & 0 \\[1mm]
 0 & \etat \tau_1
\end{array}
\right]+ O\left ( \left(q^2 + a^2\right)^{\frac{5}{2}} \right),
\end{align}
where $\alpha = \dfrac{\ei - 1}{\ei + 1}$, $\tau_k = |\tb_k|, k=1,2$, $\eta_1 = \zeta (\tau_1/2)$, $\etat = \I \zeta \left( \I \tau_2/2 \right)$,
and $\zeta (z)$ is the Weierstrass zeta-function \cite{Whittaker:1927}. We also obtain an approximation of the dispersion relation
\begin{align}
 \nu^2 = q^2 \left(1 - \frac{2\pi\alpha a^2}{\tau_1 \tau_2} \right) + O\left( q^2 (q^2 + a^2)^2 \right).
\end{align}
The method used in the paper can be used to obtain the above expressions with higher accuracy.

\section{Series expansion of the field}
\setcounter{equation}{0}

Elliptic problem \rf{Hz1}-\rf{bc1a} is symmetric, depends analytically on $q$, and has a simple
eigenvalue $\nu^2 = 0$ when $q=0$ with the eigenfunction $u=const$. Thus the eigenvalue $\nu^2$  depends analytically
on $q$ for $q \ll 1$, and the eigenfunction $u(\br,\Be)$ can be chosen to be analytic in $q$, i.e., for small $q$
we can expand $u$ and $\nu^2$  in a power series
\begin{align}
 \label{u_ser}
 u(\br, \Be) &=  1 +qu_1(\br, \bet) + q^2 u_2(\br, \bet) + q^3u_3(\br, \bet) + \ldots, \\[2mm]
\nu^2 &= q\lambda_1 + q^2\lambda_2 + q^3\lambda_3 + \ldots.
 \label{nu_ser}
\end{align}
The latter series can be viewed as a perturbation of a simple eigenvalue $\nu^2 = 0$ corresponding
to the eigenfunction $u = 1$. The rigorous justification of \rf{u_ser}, \rf{nu_ser} will be given in the Appendix.
It will be shown there that series \rf{u_ser} converges in the Sobolev space $H^1(S),$ and both of them are uniform in $a$.
Moreover, it will be shown below that series \rf{nu_ser} contains only even powers of $q$, i.e., $\lambda_{2n+1}=0$.

Substituting expansions \rf{u_ser}--\rf{nu_ser} into \rf{Hz1} we obtain a system of recurrence equations
for determination of $u_n$
\begin{align}
 \label{u_1}
 -\frac{1}{\ep}\, \Delta u_1 &= \lambda_1, \quad r\neq a\\[2mm]
 \label{u_2}
 -\frac{1}{\ep}\, \Delta u_2 &= \lambda_2 + \lambda_1 u_1, \quad r\neq a, \\[2mm]
 \label{u_3}
 -\frac{1}{\ep}\, \Delta u_3 &= \lambda_3 + \lambda_2 u_1 + \lambda_1 u_2, \quad r\neq a, \\
  & ~\,\vdots& \nonu \\[-2mm]
 \label{u_k}
 -\frac{1}{\ep}\, \Delta u_k &= \lambda_k + \sum_{n=1}^{k-1} \lambda_{k-n} u_n, \quad r\neq a, \quad k\geq 4.
\end{align}
On the boundary $r=a$ functions $u_k$ satisfy the conditions (in what follows we omit dependence of $u_k$ on $\bet$ 
for brevity)
\begin{align}
\label{bc1uk}
\left. \left\llbracket u_k (\br) \right\rrbracket \right . &=0, \\[2mm]
\left. \left\llbracket \frac{1}{\ep} \frac{\de u_k (\br)}{\de n} \right\rrbracket \right. &=0,
\label{bc2uk}
\end{align}
while on $\partial \cal S$ we have a system of recurrence equations
\begin{align}
 \label{u1b}
 \rrbracket u_1(\br) \llbracket &=  \rrbracket \I\bet \cdot \br \llbracket, \\
 \label{u2b}
 \rrbracket u_2(\br) \llbracket &=  \rrbracket (\I\bet \cdot \br)u_1 - \frac{1}{2!}\,(\I\bet \cdot \br)^2\llbracket, \\
 \label{u3b}
 \rrbracket u_3(\br) \llbracket &=  \rrbracket (\I\bet \cdot \br)u_2 - \frac{1}{2!}\,(\I\bet \cdot \br)^2 u_1
 + \frac{1}{3!}\,(\I\bet \cdot \br)^3 \llbracket, \\
 & ~\,\vdots& \nonu \\
 \label{ukb}
 \rrbracket u_k(\br) \llbracket &= \left\rrbracket \frac{(-1)^{k+1}}{k!}\,\left(\I \bet \cdot \br \right)^k
 + \sum_{n=1}^{k-1} \frac{(-1)^{n+1}}{n!}\, \left(\I \bet \cdot \br \right)^n u_{k-n} \right\llbracket, \quad k \geq 4.
\end{align}

In what follows we need to establish some important properties of the functions $u_k$. Firstly, we normalize
$u(\br, \Be)$ in such a way that $\int_{S}u(\br, \Be)\,dS=|\s|=\tau_1\tau_2$. This implies that
\begin{equation}
\int_{\s}u_k(\br)dS=0.
 \label{intzero}
\end{equation}

We also will need Green's formula for solutions of \rf{Hz1}, \rf{bc1a}:
\begin{equation}
 \int_{\s} \frac{1}{\ep}\, |\nabla u|^2\,\D S = \nu^2 \int_{\mathcal S} |u|^2\,\D S.
 \label{green}
\end{equation}
that follows from the symmetry of the problem (\ref{Hz1}), (\ref{bc1a}).
Indeed, let  $\s = \si \cup \se$ where $\si$ is the disk $r<a$. One can multiply both sides of \rf{Hz1} by the complex conjugate $\bar{u}$ of $u$ and
apply Green's first identity to each part of $\mathcal S$.
When we add up the identities, the contour integrals over the boundary
$r=a$ are cancelled due to \rf{bc1a}, and \rf{green} follows.

Consider an auxiliary problem for the function $v(\br)$
\begin{equation}
 -\frac{1}{\ep} \,\Delta v = f, \quad \br \in \s, \quad r\neq a,
 \label{eqv}
\end{equation}
with the homogeneous conditions
\begin{equation}
 \left. \left\llbracket v (\br) \right\rrbracket \right . =0, \quad
\left. \left\llbracket \frac{1}{\ep} \frac{\de v (\br)}{\de n} \right\rrbracket \right. =0, \quad
\rrbracket v(\br) \llbracket = 0.
\label{conv}
\end{equation}

\begin{lemma}
  \label{lemma1}
  (a) Problem \rf{eqv}-\rf{conv} with $f=0$ has a unique solution $v=const$.
  (b) The nonhomogeneous problem \rf{eqv}-\rf{conv} has a solution if and only if $f$ is orthogonal to a constant.
\end{lemma}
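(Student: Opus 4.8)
The plan is to treat \rf{eqv}--\rf{conv} as a self‑adjoint elliptic transmission problem whose kernel consists of constants, and to argue by the Fredholm alternative, reusing the Green's identity manipulation that produced \rf{green}.

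For part (a), set $f=0$, multiply \rf{eqv} by $\bar v$, split the integral over $\s=\si\cup\se$, and apply Green's first identity to each piece. The contour integrals over $r=a$ cancel because of the transmission conditions $\llbracket v\rrbracket=\llbracket\ep^{-1}\de_n v\rrbracket=0$, exactly as in the derivation of \rf{green}, and the integrals over $\ds$ cancel because $\rrbracket v\llbracket=0$ forces $v$ and its normal derivative to agree on opposite faces of $\s$. What remains is $\int_{\s}\ep^{-1}|\nabla v|^2\,\D S=0$, so $\nabla v\equiv 0$ on $\si$ and on $\se$; since $a<0.5$ each of these sets is connected, so $v$ is constant on each, and $\llbracket v\rrbracket=0$ identifies the two constants. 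Hence $v=const$.

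For the necessity direction of part (b), integrate \rf{eqv} over $\s$. Splitting $\int_{\s}\ep^{-1}\Delta v\,\D S$ into the integrals over $\si$ and $\se$ and applying the divergence theorem, the flux through $r=a$ cancels by $\llbracket\ep^{-1}\de_n v\rrbracket=0$ and the flux through $\ds$ cancels by the periodicity of $\de_n v$; therefore $\int_{\s}f\,\D S=0$, i.e.\ $f$ is orthogonal to a constant. For sufficiency I would pass to the weak formulation: let $\mathcal H$ be the subspace of $H^1(\s)$ of functions $w$ with $\rrbracket w\llbracket=0$ (periodicity on $\ds$) and $\int_{\s}w\,\D S=0$, and set $B(v,w)=\int_{\s}\ep^{-1}\nabla v\cdot\nabla\bar w\,\D S$. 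Reversing the integration by parts above, $v\in\mathcal H$ solves \rf{eqv}--\rf{conv} iff $B(v,w)=\int_{\s}f\bar w\,\D S$ for all $w\in\mathcal H$, with $\llbracket\ep^{-1}\de_n v\rrbracket=0$ on $r=a$ and periodicity of $\de_n v$ on $\ds$ appearing as natural boundary conditions, while $\llbracket v\rrbracket=0$ is built into $H^1(\s)$. Since $\ep$ is positive and bounded, $B$ is bounded, and the Poincar\'e inequality on the mean‑zero space $\mathcal H$ makes $B$ coercive; the map $w\mapsto\int_{\s}f\bar w\,\D S$ is bounded on $\mathcal H$, so Lax--Milgram yields a unique $v\in\mathcal H$, and adding an arbitrary constant produces the full solution set (consistency with the test function $w=1$ is automatic once $\int_{\s}f\,\D S=0$).

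The step that needs care is the equivalence of the strong and weak formulations across the interface $r=a$ — that the natural boundary condition $\llbracket\ep^{-1}\de_n v\rrbracket=0$ is genuinely recovered and that the integration by parts near $\dsi$ is legitimate — but this is standard elliptic regularity for transmission problems, so I do not anticipate a real obstacle; alternatively one may simply invoke the Fredholm alternative for $-\ep^{-1}\Delta$ with conditions \rf{conv}, which is self‑adjoint with kernel spanned by constants, whence its cokernel is spanned by constants as well, giving part (b) directly from part (a).
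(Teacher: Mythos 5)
Your proposal is correct and follows essentially the same route as the paper, whose proof is just two sentences: part (a) is the Green's-formula energy identity \rf{green} with $\nu=0$ (your cancellation of the interface and periodic boundary terms is exactly the argument behind that identity), and part (b) is an appeal to the Fredholm alternative for the self-adjoint operator with kernel spanned by constants, which you both spell out via Lax--Milgram on the mean-zero periodic subspace and note as a shortcut at the end. Your version is simply a fully worked-out expansion of what the paper asserts in one line each.
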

\begin{proof}
 First statement follows from the application of Green's formula \rf{green} with $\nu=0$ to \rf{eqv}.
 The second statement is the Fredholm alternative applied to equation \rf{eqv}.
\end{proof}

\begin{lemma}
 \label{lemma2}
 The pair $u_k(\br), \lambda_k$ is defined uniquely from \rf{u_1}-\rf{intzero}, i.e., problem \rf{u_1}-\rf{intzero} does not have solutions that are different from those defined in \rf{u_ser}, \rf{nu_ser}.
\end{lemma}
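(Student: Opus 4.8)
The plan is to argue by induction on $k$, reducing the question to the auxiliary problem \rf{eqv}--\rf{conv} and invoking Lemma \ref{lemma1}. Assume $u_1,\dots,u_{k-1}$ and $\lambda_1,\dots,\lambda_{k-1}$ have already been shown to be uniquely determined by \rf{u_1}--\rf{intzero} (for $k=1$ this hypothesis is empty), and suppose $(u_k,\lambda_k)$ and $(\hat u_k,\hat\lambda_k)$ are two solutions of the $k$-th equations \rf{u_k}, \rf{bc1uk}, \rf{bc2uk}, \rf{ukb} (with \rf{u1b}--\rf{u3b} in place of \rf{ukb} when $k\leq 3$) together with the normalization \rf{intzero}.

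Set $v=u_k-\hat u_k$ and $\mu=\lambda_k-\hat\lambda_k$. The crucial point is that every inhomogeneous term in the $k$-th system is built solely from $u_1,\dots,u_{k-1}$ and $\lambda_1,\dots,\lambda_{k-1}$, which by the inductive hypothesis coincide for the two solutions; in particular the sum $\sum_{n=1}^{k-1}\lambda_{k-n}u_n$ on the right of \rf{u_k}, and the whole right-hand side of \rf{ukb} (which contains only the powers $(\I\bet\cdot\br)^j$ and $u_1,\dots,u_{k-1}$), are the same for $u_k$ and $\hat u_k$. Subtracting the two copies of the system therefore gives $-\frac{1}{\ep}\Delta v=\mu$ for $r\neq a$; $\llbracket v\rrbracket=0$ and $\llbracket\frac{1}{\ep}\frac{\de v}{\de n}\rrbracket=0$ on $r=a$; $\rrbracket v\llbracket=0$ on $\de\s$; and, from \rf{intzero}, $\int_{\s}v\,\D S=0$. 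Hence $v$ solves the homogeneous-boundary problem \rf{eqv}--\rf{conv} with constant right-hand side $f=\mu$.

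Lemma \ref{lemma1} now finishes the argument. By part (b), solvability of \rf{eqv}--\rf{conv} forces $f$ to be orthogonal to constants, i.e. $\mu\,|\s|=\mu\,\tau_1\tau_2=0$; since $\tau_1\tau_2\neq 0$ this gives $\mu=0$, so $\lambda_k=\hat\lambda_k$. Then $v$ solves \rf{eqv}--\rf{conv} with $f=0$, so by part (a) $v$ is a constant, and $\int_{\s}v\,\D S=0$ forces $v=0$, i.e. $u_k=\hat u_k$. This closes the induction. (Existence of a solution, namely the coefficients of \rf{u_ser}, \rf{nu_ser}, is established in the Appendix, so only uniqueness is needed here.)

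I do not expect a genuine obstacle here: the argument is short once Lemma \ref{lemma1} is available. The step most prone to error, and the one I would double-check, is the bookkeeping that shows $(v,\mu)$ satisfy the \emph{homogeneous} problem \rf{eqv}--\rf{conv} — that is, that the summations in \rf{u_k} and \rf{ukb} stop at index $k-1$ so that no $u_k$-dependence survives the subtraction, and that the boundary data on $\de\s$ in \rf{u1b}--\rf{ukb} involves only lower-order $u_n$. The other point worth stating explicitly is that the solvability condition $\mu\,\tau_1\tau_2=0$ is non-degenerate precisely because $|\s|=\tau_1\tau_2\neq 0$, and this is what pins $\lambda_k$ down.
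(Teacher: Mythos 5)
Your proof is correct and follows essentially the same route as the paper's: both reduce the difference of two candidate solutions to the auxiliary problem \rf{eqv}--\rf{conv} with $f=\lambda_k^{(1)}-\lambda_k^{(2)}$ and then apply Lemma \ref{lemma1} together with the normalization \rf{intzero}. The paper phrases the induction as a minimal-counterexample argument, but the content is identical to your inductive step.
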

\begin{proof}
 Let $k \geq 1$ be the least number for which there are two different pairs $u_k(\br), \lambda_k$.
 Then 3.3-3.6 implies that there are two different functions $u^{(1)}_k (\br)$ and $u^{(2)}_k (\br)$.
 Hence their difference $v (\br) = u^{(1)}_k (\br) - u^{(2)}_k (\br)$ satisfies \rf{eqv}-\rf{conv}
 with $f = \lambda_k^{(1)} - \lambda_k^{(2)}$. From the previous lemma it follows that
 $\lambda_k^{(1)} = \lambda_k^{(2)}$ and $v (\br) = const$. The latter together with \rf{intzero} implies
 $v (\br) =0$.
\end{proof}

We will use the term odd or even function if the corresponding property holds with respect to the origin, i.e.,
a scalar function $f(\br)$ is odd if $f(-\br)=-f(\br)$ and is even if $f(-\br)=f(\br)$.
Now we can formulate the result concerning the structure of expansions \rf{u_ser}-\rf{nu_ser}.
\begin{theorem}
 \label{theorem1}
 Functions $u_k(\br, \bet)$ in expansion \rf{u_ser} are odd functions of $\br$ for odd $k$
 and even ones if $k$ is even. Expansion \rf{nu_ser} of $\nu^2$ contains only even powers of $q$,
 i.e. $\lambda_{2k-1} = 0, \; k=1,2,\ldots$
\end{theorem}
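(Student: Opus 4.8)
The plan is to argue by strong induction on $k$, using the uniqueness of the pair $(u_k,\lambda_k)$ established in Lemma~\ref{lemma2}. The key observation is that the map $\br \mapsto -\br$ is a symmetry of the whole construction: it preserves the disk $r<a$, hence $\ep(\br)$; it preserves the fundamental cell $\s$ centered at the origin; and on the opposite sides of $\partial\s$ it acts by swapping them, so the inverted-bracket jump conditions are respected. Concretely, if $w(\br)$ is any function, set $w^{-}(\br):=w(-\br)$; then $\Delta(w^{-})=(\Delta w)^{-}$, the interface conditions \rf{bc1uk}--\rf{bc2uk} for $w^{-}$ are equivalent to those for $w$, and $\rrbracket w^{-}(\br)\llbracket$ at a point of $\partial\s$ equals $\rrbracket w(\br)\llbracket$ evaluated at the antipodal point. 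Finally $\int_\s w^{-}\,\D S=\int_\s w\,\D S$, so the normalization \rf{intzero} is preserved.

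First I would handle the base case. For $k=1$: apply $\br\mapsto-\br$ to the defining system \rf{u_1}, \rf{bc1uk}--\rf{bc2uk}, \rf{u1b}, \rf{intzero} for $u_1$. Since $\I\bet\cdot\br$ is odd, the right-hand side of \rf{u1b} changes sign under $\br\mapsto-\br$, while \rf{u_1} forces $-\frac1\ep\Delta(-u_1(-\br))=\lambda_1$, i.e. $-u_1^{-}$ solves the same interior equation but with $\lambda_1$ replaced by... here is the subtlety: the interior equation is $-\frac1\ep\Delta u_1=\lambda_1$, and $-\frac1\ep\Delta(u_1^{-})=\lambda_1$ as well, so $-u_1^{-}$ satisfies $-\frac1\ep\Delta(-u_1^{-})=-\lambda_1$. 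Pairing this with the sign-flipped boundary data, the pair $(-u_1^{-},-\lambda_1)$ solves exactly the system defining $(u_1,\lambda_1)$ — provided we also note $\lambda_1=0$ must be checked, or rather it will drop out. By Lemma~\ref{lemma2} uniqueness gives $-u_1^{-}=u_1$ and $-\lambda_1=\lambda_1$, i.e. $u_1$ is odd and $\lambda_1=0$. This simultaneously starts both induction statements.

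For the inductive step, assume for all $j<k$ that $u_j^{-}=(-1)^j u_j$ and $\lambda_j=0$ for odd $j$. Consider the system \rf{u_k}, \rf{bc1uk}--\rf{bc2uk}, \rf{ukb}, \rf{intzero} defining $(u_k,\lambda_k)$, and apply $\br\mapsto-\br$. In the interior equation $-\frac1\ep\Delta u_k=\lambda_k+\sum_{n=1}^{k-1}\lambda_{k-n}u_n$, replace $u_k$ by $(-1)^k u_k^{-}$; each surviving term $\lambda_{k-n}u_n$ on the right has $k-n$ even (odd terms vanish by induction) hence $n$ has the parity of $k$, so $u_n^{-}=(-1)^k u_n$, and the constant $\lambda_k$ on the right must be matched against $(-1)^k\lambda_k$. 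In the boundary condition \rf{ukb}, the term $(\I\bet\cdot\br)^k$ carries a factor $(-1)^k$ and each product $(\I\bet\cdot\br)^n u_{k-n}$ carries $(-1)^n(-1)^{k-n}=(-1)^k$ by the inductive parity of $u_{k-n}$. Thus after the substitution the entire system for $\bigl((-1)^k u_k^{-},\,(-1)^k\lambda_k\bigr)$ coincides term-by-term with the system for $(u_k,\lambda_k)$; uniqueness (Lemma~\ref{lemma2}) forces $u_k^{-}=(-1)^k u_k$ and, when $k$ is odd, $(-1)^k\lambda_k=\lambda_k$ hence $\lambda_k=0$.

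I expect the main obstacle to be purely bookkeeping rather than conceptual: one must verify carefully that every clause of the recurrence — the interior PDE with its full convolution sum, the two interface jump conditions at $r=a$, the inverted-bracket Floquet condition on $\partial\s$, and the zero-mean normalization — transforms correctly under $\br\mapsto-\br$, and that the parity of each term in the sums $\sum_{n=1}^{k-1}\lambda_{k-n}u_n$ and $\sum_{n=1}^{k-1}\frac{(-1)^{n+1}}{n!}(\I\bet\cdot\br)^n u_{k-n}$ is uniform (all equal to $(-1)^k$), which is exactly what makes the sign factor out. The only place needing a moment's care is that the odd-$\lambda$ vanishing is not an independent input but emerges from the same uniqueness argument: when $k$ is odd the scalar $\lambda_k$ is forced to equal its own negative. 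Everything else is a direct application of Lemma~\ref{lemma2}.
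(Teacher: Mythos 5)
Your proposal is correct and is essentially the paper's argument: the paper also proceeds by induction on $k$, exploiting the $\br\mapsto-\br$ symmetry by splitting $u_k$ into even and odd components (i.e.\ $\tfrac12\bigl(u_k(\br)\pm u_k(-\br)\bigr)$) and showing the wrong-parity component solves the homogeneous auxiliary problem, which forces it and $\lambda_{2m+1}$ to vanish by Lemma~\ref{lemma1}. Your routing through the reflected pair $\bigl((-1)^k u_k(-\br),(-1)^k\lambda_k\bigr)$ and the uniqueness statement of Lemma~\ref{lemma2} is an equivalent packaging of the same idea, and your parity bookkeeping for the convolution sums and the cell-boundary jumps is sound.
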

\begin{proof}
 We prove the theorem by induction in $k$. For $k=1$ the boundary condition \rf{u1b} is odd. Then the even component
 $v (\br)$ of $u_1$ is the solution of \rf{eqv}-\rf{conv} with $f=\lambda_1$. From lemma \ref{lemma1} it follows that
 $\lambda_1=0$ and lemma \ref{lemma2} implies that $v=0$. Hence, the statement of the theorem is valid for $k=1$.
 Assume now that the statement of the theorem holds for $1 \leq k < k_0$.
 Let us prove it for $k=k_0$. We need to consider two cases of even and odd $k_0$.
 \begin{itemize}
 \item[]
  Case 1: If $k_0 = 2m$ then $\lambda_{k-n}=0$ in \rf{u_k} when $n$ is odd. Thus, the right-hand side
  of \rf{u_k} is even by the induction hypothesis. The right-hand side of \rf{ukb} is also even. Thus,
  the odd component of $u_{2m}$ satisfies the homogeneous problem and equals zero due to Lemma \ref{lemma1}.
 \item[]
  Case 2: Let $k_0 = 2m+1$. Then the right-hand side of \rf{u_k} is the sum of $\lambda_{2m+1}$ and an odd function
  by the induction hypothesis. The right-hand side of \rf{ukb} is odd. Thus, the even component of $u_{2m+1}$
  is the solution of \rf{eqv}-\rf{conv} with $f=const = \lambda_{2m+1}$. From Lemmas \ref{lemma1} and \ref{lemma2}
  it follow that $\lambda_{2m+1}=0$ and the even component of $u_{2m+1}$ is zero.
 \end{itemize}
\end{proof}

Substitution of expansion \rf{u_ser} into \rf{green} and taking into account the oddness and evenness
of $u_k$ leads to approximation of $\nu^2$. In particular, we obtain to the order $O\left( q^6 \right)$
\begin{align}
 \nu^2 = q^2 \, \frac{\dst \int_{\s} \frac{1}{\ep} \left(|\nabla u_1 |^2 + q^2 \left( |\nabla u_2|^2
 +2\re \left( \nabla u_1 \cdot \nabla \bar{u}_3 \right)\right)
  \right)\, \D S }{\dst \int_{\s} \left(1 + q^2 \left( |u_1|^2 +2\re u_2 \right) \right)\, \D S}+ O\left( q^6\right),
 \label{nu2}
\end{align}
where $\bar{u}$ is the complex conjugate of $u$.

\section{A priori estimates for the power series terms}
\setcounter{equation}{0}

Functions $u_k$ in \rf{nu2} and in formulas \rf{epo1}, \rf{epo2} (which are used to find $\varepsilon^*$, see below) are obtained as solutions of certain boundary value problems which depend on $a$ and can be expanded
in power series in $a$. We need some a priori estimates for the solutions of these problems in order to justify
the asymptotic convergence of the power series in $a$. We will start with recalling the Poincar\'{e} lemma, which is so simple in our setting ($\s$ is a rectangle) that we will prove it.
\begin{lemma}
\label{lemma0}
Let $v\in H^1(\s)$ and
\begin{equation}
\int_{\s}v(\br)\, \D S=0.
 \label{intzero1}
\end{equation}
Then $\|v\|_{L_2}\leq C_1 \|\nabla v\|_{L_2}$ and $\|v\|_{H^1}\leq C_2 \|\nabla v\|_{L_2}$.
\end{lemma}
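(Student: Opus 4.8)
The plan is to give a direct, elementary argument that exploits the fact that $\s$ is a rectangle, say $\s=(-\tau_1/2,\tau_1/2)\times(-\tau_2/2,\tau_2/2)$. First I would reduce to the case of a function smooth up to the boundary: since the rectangle has Lipschitz boundary, $C^\infty(\overline{\s})$ is dense in $H^1(\s)$, and both sides of the claimed inequalities are continuous in the $H^1$-norm while the constraint \rf{intzero1} is preserved under $H^1$-convergence, so it suffices to prove the bounds for $v\in C^\infty(\overline{\s})$.

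For such $v$, and for any two points $\br=(x,y)$ and $\br'=(x',y')$ in $\s$, the fundamental theorem of calculus along the broken path $\br'\to(x',y)\to(x,y)$ gives
\[
v(x,y)=v(x',y')+\int_{y'}^{y}\de_t v(x',t)\,\D t+\int_{x'}^{x}\de_s v(s,y)\,\D s .
\]
Integrating this identity in $\br'$ over $\s$ and dividing by $|\s|=\tau_1\tau_2$ makes the $v(x',y')$ term disappear by \rf{intzero1}, so $v(x,y)$ is expressed purely in terms of the first derivatives of $v$. I would then estimate each inner integral by Cauchy--Schwarz, e.g. $\bigl|\int_{x'}^{x}\de_s v(s,y)\,\D s\bigr|\le\sqrt{\tau_1}\,\bigl(\int_{-\tau_1/2}^{\tau_1/2}|\de_s v(s,y)|^2\,\D s\bigr)^{1/2}$ and similarly for the $t$-integral, square using $(a+b)^2\le 2a^2+2b^2$, and integrate in $(x,y)$ over $\s$. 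Fubini turns the right-hand side into $\|\nabla v\|_{L_2(\s)}^2$ multiplied by an explicit constant depending only on $\tau_1,\tau_2$, which yields $\|v\|_{L_2}\le C_1\|\nabla v\|_{L_2}$. The second inequality is then immediate, since $\|v\|_{H^1}^2=\|v\|_{L_2}^2+\|\nabla v\|_{L_2}^2\le(C_1^2+1)\|\nabla v\|_{L_2}^2$, so one may take $C_2=\sqrt{C_1^2+1}$.

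There is essentially no hard step here; the only mildly non-elementary ingredient is the density of smooth functions in $H^1$ on a rectangle, which is standard (and could even be bypassed by running the same computation with difference quotients in place of derivatives). The rest is the routine bookkeeping of the constant $C_1$, which I will not belabor.
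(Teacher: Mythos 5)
Your proof is correct, but it takes a genuinely different route from the paper's. The paper expands $v$ in the doubly periodic Fourier basis $\E^{2\pi\I\left(\frac{mx}{\tau_1}+\frac{ny}{\tau_2}\right)}$, observes that the $(0,0)$ coefficient vanishes by \rf{intzero1}, and compares $\|v\|_{L_2}^2$ with $\|\nabla v\|_{L_2}^2$ coefficientwise, the whole point being that $\left(\frac{m}{\tau_1}\right)^2+\left(\frac{n}{\tau_2}\right)^2$ is bounded below for $(m,n)\neq(0,0)$. You instead use the classical integral-representation proof: write $v(\br)-v(\br')$ as a line integral of $\nabla v$ along a broken path, average over $\br'$ so that the mean-zero condition removes the $v(\br')$ term, and finish with Cauchy--Schwarz and Fubini; carrying out your bookkeeping gives an explicit constant such as $C_1=\sqrt{2}\,\max(\tau_1,\tau_2)$, and the passage to $C_2=\sqrt{C_1^2+1}$ is identical to the paper's. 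Both arguments are sound. One thing your version buys: it applies verbatim to every mean-zero $v\in H^1(\s)$, whereas the Fourier identity for $\|\nabla v\|_{L_2}^2$ in the paper tacitly requires termwise differentiation of the series, i.e.\ periodicity of $v$ across opposite sides of the cell (which does hold in all of the paper's applications of the lemma, where $\rrbracket v\llbracket=0$, but is not part of the hypothesis as stated). The density of $C^\infty(\overline{\s})$ in $H^1(\s)$ that you invoke is standard on a Lipschitz domain, and the constraint and both sides of the inequality do pass to the limit, so that reduction is fine.
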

\begin{proof}
We will prove the first inequality since it obviously implies the second with $\dst C_2 = \sqrt{C_1^2 + 1}$. In order to prove the first inequality we write $u$ in the form of the Fourier series:
\begin{align}
\nonumber
v={\sum_{m,n}}^\prime v_{mn} \E ^{2\pi \I \left( \frac{mx}{\tau_1} + \frac{ny}{\tau_2} \right)},
 \end{align}
 where the prime indicates that the term $v_{00}$ is omitted. This term is zero due to \rf{intzero1}. It remains to compare the norms expressed through the Fourier coefficients:
 \begin{align}\nonumber
 \| v\|^2_{L_2}=c{\sum_{m,n}}^\prime|v_{m,n} |^2, \quad \| \nabla v\|^2_{L_2}= c{\sum_{m,n}}^\prime |v_{m,n} |^2
  \left[ \left( \frac{m}{\tau_1}\right)^2 + \left( \frac{n}{\tau_2}\right)^2\right], \quad c= \tau_1 \tau_2
 \end{align}
\end{proof}
\begin{lemma}
\label{lemma3}
 Let $v(\br)$ be the solution of the problem
 \begin{equation}
   \frac{1}{\ep}\Delta v(\br) = f(\br), \quad \br \in \s, \quad r\neq a,
  \label{v}
 \end{equation}
 subject to the conditions
\begin{align}
\left. \left\llbracket v (\br) \right\rrbracket \right . =0, \quad
\left. \left\llbracket \frac{1}{\ep} \frac{\de v (\br)}{\de n} \right\rrbracket \right. =0, \quad
 \rrbracket v(\br) \llbracket =  0, \quad
\end{align}
and let condition \rf{intzero1} hold. Then
\begin{equation}
 \left\| v \right\|_{H^1} \leq C \left\|  f\right\|_{L_2}.
\end{equation}
\end{lemma}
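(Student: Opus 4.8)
The plan is to reduce the estimate to the closed a priori bound already available, namely the $L_2$--Poincar\'e inequality of Lemma \ref{lemma0}, by exhibiting $v$ itself as (up to an additive constant) the unique solution furnished by the Fredholm theory of Lemma \ref{lemma1}, and then to upgrade the bound on $\|\nabla v\|_{L_2}$ to a bound on $\|v\|_{H^1}$. Concretely, first I would note that the hypotheses of part (b) of Lemma \ref{lemma1} are met: the right-hand side $f$ need not be orthogonal to the constant, so I replace $f$ by $\tilde f = f - \frac{1}{|\s|}\int_\s f\,\D S$, solve $\frac{1}{\ep}\Delta \tilde v = \tilde f$ with the homogeneous jump conditions \eqref{conv}, and observe that the discrepancy between $f$ and $\tilde f$ is a constant whose only effect is to shift $\lambda$--type terms; since the compatibility constant is controlled by $\left|\int_\s f\right| \leq |\s|^{1/2}\|f\|_{L_2}$, it contributes at most $C\|f\|_{L_2}$ to every subsequent norm. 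After subtracting a constant so that \eqref{intzero1} holds, $v = \tilde v$ is the function produced by the Fredholm alternative.

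The heart of the argument is the energy identity. Multiplying \eqref{v} by $\bar v$, integrating over $\si$ and $\se$ separately, applying Green's first identity on each piece, and adding — exactly the computation used to derive \eqref{green}, with the contour integrals over $r=a$ cancelling by the transmission conditions in \eqref{conv} and the boundary terms on $\partial\s$ cancelling by the periodicity condition $\rrbracket v \llbracket = 0$ — yields
\begin{equation}
\int_\s \frac{1}{\ep}\,|\nabla v|^2\,\D S = -\int_\s f\,\bar v\,\D S. \nonu
\end{equation}
Since $\ep$ is bounded above and below by positive constants independent of $a$ (it is either $\ei$ or $1$), the left side dominates $c\,\|\nabla v\|_{L_2}^2$; the right side is bounded by $\|f\|_{L_2}\|v\|_{L_2}$, and here I invoke Lemma \ref{lemma0} to replace $\|v\|_{L_2}$ by $C_1\|\nabla v\|_{L_2}$. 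Dividing through by $\|\nabla v\|_{L_2}$ gives $\|\nabla v\|_{L_2}\leq C\|f\|_{L_2}$, and a second application of Lemma \ref{lemma0} promotes this to $\|v\|_{H^1}\leq C\|f\|_{L_2}$, with all constants depending only on the fixed ratios $\tau_1,\tau_2$ and on $\ei$, not on $a$.

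The one point that needs genuine care — and which I expect to be the main obstacle — is the justification of the integration by parts across $r=a$ when $f$ is merely $L_2$: a priori $v$ need not be smooth enough on the interface for the pointwise trace manipulations. The remedy is to interpret \eqref{v}–\eqref{conv} weakly from the start, i.e.\ to define $v$ as the function satisfying $\int_\s \frac{1}{\ep}\nabla v\cdot\nabla\bar\varphi\,\D S = -\int_\s f\,\bar\varphi\,\D S$ for all periodic $\varphi\in H^1(\s)$ (the transmission conditions being automatically encoded in this variational formulation, since the natural boundary condition associated with the piecewise-constant coefficient $\ep^{-1}$ is precisely continuity of $\ep^{-1}\partial_n v$). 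Then the energy identity is just the weak formulation tested against $\varphi = v$, and no regularity beyond $H^1$ is used; the existence and uniqueness modulo constants of such a $v$ is exactly Lemma \ref{lemma1} read in weak form. With this reading the proof is short and the $a$-uniformity of $C$ is manifest.
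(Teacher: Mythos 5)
Your proof is correct and follows essentially the same route as the paper: the energy identity obtained by testing against $\bar v$ (with the interface and periodic boundary terms cancelling), followed by the Poincar\'e inequality of Lemma \ref{lemma0} to absorb $\|v\|_{L_2}$ and then to upgrade to the $H^1$ norm. Your additional remarks on the weak formulation and the Fredholm compatibility condition are sensible precautions but not a different argument; note only that since $v$ is assumed to exist, Lemma \ref{lemma1}(b) already forces $\int_\s f\,\D S=0$, so the modification of $f$ is unnecessary.
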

\begin{proof}
 Multiplying \rf{v} by $\bar{v}$ and applying Green's first identity we obtain
 \begin{align}
   \int_{\s}  f \bar{v} \, \D S =\int_{\s} \frac{1}{\ep} \bar{v} \Delta v \, \D S
  =\int_{\s} \frac{1}{\ep} \left| \nabla v \right|^2 \D S = \left\| \frac{1}{\ep}\nabla v \right\|^2_{L_2}.
  \label{green2}
 \end{align}
 To be more accurate, one needs to write Green's first identities separately for each part $S_{in}, S_{ex}$ of $S$, add them and check that the contour integrals over the boundary $r=a$ are cancelled. Equality \rf{green2} implies $\left\| \nabla v \right\|^2_{L_2}\leq \left\| f \right\|_{L_2}\left\| v \right\|_{L_2}$. It remains to apply Lemma \ref{lemma0}.

\end{proof}
Next lemma shows that a similar estimate holds when an inhomogeneity appears in the boundary condition.
\begin{lemma}
 \label{lemma4}
 Suppose that $v(\br)$ satisfies
 \begin{equation}
  \Delta v(\br) = 0, \quad \br \in \s, \quad r\neq a,
 \end{equation}
 the boundary conditions
\begin{align}
\left. \left\llbracket v (\br) \right\rrbracket \right . =h(\phi), \quad
\left. \left\llbracket \frac{1}{\ep} \frac{\de v (\br)}{\de n} \right\rrbracket \right. = g(\phi), \quad
 \rrbracket v(\br) \llbracket =  0,
\label{v_cond}
\end{align}
where $\phi$ is the polar angle. Let also $ v(0) =0$. Then
\begin{equation}
 \left\| \nabla v \right\|_{L_2} \leq c_1 \left( \int_{0}^{2\pi} \left(g^2(\phi)
 + \left[g^{\prime\prime}(\phi)\right]^2 \right)\D \phi \right)^{1/2} +
 \frac{c_2}{a} \left( \int_{0}^{2\pi} \left(h^2(\phi)
 + \left[h^{\prime\prime}(\phi)\right]^2 \right)\D \phi \right)^{1/2}.
\end{equation}
\end{lemma}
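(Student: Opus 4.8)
The plan is to remove the two inhomogeneous conditions on $r=a$ by subtracting an explicit corrector and then to apply Lemma~\ref{lemma3}. Note first that adding a constant to $v$ changes neither the hypotheses nor $\nabla v$, so the normalization $v(0)=0$ merely pins down the otherwise free additive constant; I keep it for definiteness. Assume $a$ is small enough that the annulus $A_a=\{a\le r\le 2a\}$ lies inside $\s$ and is disjoint from $\partial\s$ (this is the regime of the paper and keeps the corrector away from the cell boundary). \textbf{Step 1 (corrector).} I construct $w(\br)$, smooth for $r\neq a$ and supported in $\{r\le 2a\}$ — so $\rrbracket w\llbracket=0$ automatically — with $\llbracket w\rrbracket=h(\phi)$ and $\llbracket\frac1\ep\frac{\partial w}{\partial n}\rrbracket=g(\phi)$ on $r=a$ and with $w(0)=0$. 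Working mode by mode in $e^{\I n\phi}$, for $n\neq 0$ I take $w$ equal to $c_n r^{|n|}e^{\I n\phi}$ in $\si$ and to $\chi\!\left(\tfrac{r-a}{a}\right)d_n r^{-|n|}e^{\I n\phi}$ in $\se$, where $\chi$ is a fixed smooth function with $\chi\equiv 1$ on $[0,\tfrac12]$ and $\chi\equiv 0$ on $[1,\infty)$; the two interface conditions then become, for each $n$, a $2\times 2$ system for $\bigl(c_n a^{|n|},\,d_n a^{-|n|}\bigr)$ with a fixed nonsingular matrix (independent of $n$ and $a$), whose solution has size $O\!\left(|h_n|+|n|^{-1}a\,|g_n|\right)$. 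The radial mode $n=0$ needs a small change, since no harmonic radial function carries a radial flux: I take $w=\beta r^2$ in $\si$, which fixes $g_0$ through $\beta\sim g_0/a$ and keeps $w(0)=0$, and $w=\chi\!\left(\tfrac{r-a}{a}\right)d_0$ in $\se$, which fixes $h_0$; here $\Delta w=4\beta$ on $\si$, which is harmless.

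\textbf{Step 2 (size of the corrector).} Parseval on the circle, the explicit coefficients of Step~1, and the bounds $\chi'=O(1/a)$, $\chi''=O(1/a^2)$ on the $O(a)$-wide transition layer give, after a routine computation,
\begin{align*}
 \|\nabla w\|_{L_2(\s)}&\leq C\bigl(\|h\|_{H^1(0,2\pi)}+a\,\|g\|_{L_2(0,2\pi)}\bigr),\\
 \Bigl\|\tfrac1\ep\Delta w\Bigr\|_{L_2(\s)}&\leq C\bigl(\tfrac1a\,\|h\|_{H^1(0,2\pi)}+\|g\|_{L_2(0,2\pi)}\bigr),
\end{align*}
the extra factor $1/a$ in the second estimate being precisely the loss from differentiating the cut-off at the intrinsic scale $a$. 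Finally $\|h\|_{H^1(0,2\pi)}$ and $\|g\|_{L_2(0,2\pi)}$ are dominated by $\bigl(\int_0^{2\pi}(h^2+(h'')^2)\,\D\phi\bigr)^{1/2}$ and $\bigl(\int_0^{2\pi}(g^2+(g'')^2)\,\D\phi\bigr)^{1/2}$ respectively, using $\|h'\|_{L_2}^2\le\|h\|_{L_2}\|h''\|_{L_2}$ for the former.

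\textbf{Step 3 (conclusion).} Put $\tilde v=v-w$. Since $v$ and $w$ have the same jumps across $r=a$, $\tilde v$ is continuous there, hence $\tilde v\in H^1(\s)$; moreover $\tilde v$ is periodic, the jump of $\frac1\ep\frac{\partial\tilde v}{\partial n}$ vanishes, and $\frac1\ep\Delta\tilde v=-\frac1\ep\Delta w$ for $r\neq a$. Replacing $\tilde v$ by $\tilde v-\frac1{|\s|}\int_\s\tilde v$ — which alters neither $\nabla\tilde v$ nor the equation nor the boundary conditions — makes \rf{intzero1} hold, so Lemma~\ref{lemma3} gives $\|\nabla\tilde v\|_{L_2}\le\bigl\|\tilde v-\tfrac1{|\s|}\int_\s\tilde v\bigr\|_{H^1}\le C\bigl\|\tfrac1\ep\Delta w\bigr\|_{L_2}$. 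Then $\|\nabla v\|_{L_2}\le\|\nabla\tilde v\|_{L_2}+\|\nabla w\|_{L_2}$, and combining this with Step~2 together with $a<\tfrac12$ (which absorbs the $\|\nabla w\|$ term into the stated right-hand side) yields the assertion.

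The main obstacle is Steps~1--2: producing a \emph{single} function $w$ that matches both inhomogeneous interface conditions exactly while keeping $w(0)=0$ and $\operatorname{supp}w$ off $\partial\s$, and then tracking the powers of $a$ through the mode-by-mode system and the Parseval sums so that the Laplacian bound loses exactly one power of $a$ and no more. The two points needing care are the radial mode $n=0$ and, for general periods $\tb_1,\tb_2$, the requirement that the layer $A_a$ fit inside the cell — which is what forces "$a$ small"; the complementary range, with $a$ bounded away from $0$, is covered by standard elliptic estimates since the geometry is then nondegenerate.
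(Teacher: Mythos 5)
Your proposal is correct, and it follows the same overall skeleton as the paper — subtract an explicit corrector that carries the two interface jumps, then apply Lemma \ref{lemma3} to the remainder — but the corrector itself is built quite differently. The paper uses a one-line radial-polynomial lifting supported entirely inside the inclusion: $v_1=\ei r^2(a-r)g(\phi)/a^2$ for the flux jump and $v_1=r^2(2r-3a)h(\phi)/a^3$ for the jump of the function, each vanishing at the origin and matching the required jump at $r=a$ while leaving the other jump unperturbed; the stated powers of $a$ then drop out of a direct computation of $\Delta v_1$, the two jumps are handled separately by linearity, and since the corrector lives in $r<a$ it never interacts with $\partial\s$, so nothing beyond $a<1/2$ is needed and no case split arises. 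Your mode-by-mode harmonic lifting with an exterior cutoff on $a\leq r\leq 2a$ also works: the $2\times 2$ system in $\bigl(c_na^{|n|},d_na^{-|n|}\bigr)$ is indeed uniformly invertible, your separate treatment of the radial mode is the right fix for the absence of a harmonic radial flux carrier, and your Parseval bookkeeping correctly produces the single $1/a$ loss from differentiating the cutoff. Your route is in fact slightly sharper in regularity — it needs only $h\in H^1$ and $g\in L_2$ on the circle rather than $h'',g''\in L_2$ — but it costs you the requirement that the transition annulus fit inside the cell, hence the extra (hand-waved but harmless) case of $a$ bounded away from zero, which the paper's interior-supported corrector avoids entirely. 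Your final reduction (subtracting the mean to enforce the Poincar\'e normalization, then the triangle inequality with $\|\nabla w\|_{L_2}$) matches the paper's use of Lemma \ref{lemma3} and is, if anything, stated more carefully.
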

\begin{proof}
 Let $h=0$.
 After the substitution
 \begin{equation}
 \label{v1}
 v = v_1 + w, \quad \text{where} \quad
 v_1 = \left\{
 \begin{array}{cl}
 \dfrac{\ei r^2}{a^2} (a-r)g(\phi), & r<a, \\[2mm]
 0, & r>a,
 \end{array}
 \right.
 \end{equation}
 the problem for $w$ is reduced to that outlined in the previous lemma with
 \begin{equation}
 f = \left\{
 \begin{array}{cl}
 \dfrac{\ei}{a^2} \left( (9r-4a) g(\phi) + (r-a) g^{\prime\prime}(\phi) \right), & r<a, \\[2mm]
 0, & r>a.
 \end{array}
 \right.
 \end{equation}
 If there is a jump $h(\phi)$ of the function in \rf{v_cond} instead of the jump $g(\phi)$ of the derivative
 then function $v_1$ in \rf{v1} must be replaced by the function
  \begin{equation}
 v_1 = \left\{
 \begin{array}{cl}
 \dfrac{r^2}{a^3} \left( 2r -3a \right)h(\phi), & r<a, \\[2mm]
 0, & r>a.
 \end{array}
 \right.
 \end{equation}
\end{proof}
If an inhomogeneity is present in both the equation and the boundary conditions then the sum of the estimates from
lemma \ref{lemma3} and lemma \ref{lemma4} gives an estimate of the norm of the gradient.

\subsection{Approximation of $u_1$}

Harmonic function $u_1$ is a solution of the static (with $q=0$) problem.
It is expedient to look for $u_1$ in the form of a power series in the inclusion and a combination of a linear and Weierstrass' zeta-function and its derivatives outside the inclusion.
The choice of the Weierstrass function is dictated by periodicity of the composite and the periodic properties of the Weierstrass function. This approach was used in
\cite{G:13}. Here we use similar representation of $u_1$ and achieve desired accuracy in $a$ using a finite number of terms of the corresponding series.

We introduce complex variable $z=x+\I y=r\E^{\I\phi}$, and along with vector periods $\tb_1$ and
$\tb_2$ we will use their complex counterparts $\tau_1>0$ and $\I \tau_2$, $\tau_2 >0$.
Weierstrass' zeta-function \cite{Whittaker:1927} is defined by
\begin{equation}
 \zeta (z) = \frac{1}{z} + {\sum_{m,n}}^\prime \left[ \frac{1}{z-\P} + \frac{1}{\P}
+ \frac{z}{\P^{\,2}} \right],
\label{zeta}
\end{equation}
where  $\P = m\tau_1 + \I n\tau_2$ are coordinates of the lattice nodes in the complex plane.
Prime in the sum means that summation is extended over all pairs $m,\,n$ except $m=n=0$.
We use its quasiperiodicity property
\begin{align}
 \label{quasi1}
 \zeta(z+\tau_1) - \zeta(z) &= 2\eta_1, \hspace{1ex} \eta_1 = \zeta\left(\tau_1/2 \right), \\[2mm]
 \zeta(z+\I\tau_2) - \zeta(z) &= 2\eta_2, \hspace{1ex} \eta_2 = \zeta\left(\I\tau_2/2 \right),
 \label{quasi2}
 \end{align}
where for rectangular lattices $\eta_1$ is purely real while $\eta_2$ is purely imaginary. It is convenient
to introduce real parameter $\etat = \I \eta_2$. If we subtract from $\zeta (z)$ its linear part then
the resulting function will be periodic and harmonic. Thus,
\begin{equation}
\label{period}
 \left\rrbracket \zeta(z) - \frac{2\eta_1}{\tau_1}\,x + \frac{2\I \etat}{\tau_2}\, y \right\llbracket =0.
\end{equation}
This property is used in the lemma below to find an approximation $\ut_1$ to $u_1$ to the order
$\dst O\left(a^5 \right)$.
\begin{lemma}\label{appu1}
Denote $\ut_1^{in}=\ut_1, r<a,$ and $\ut_1^{ex}=\ut_1, r>a.$ Let
\begin{align}
 \label{u1_in_appr}
 \ut_1^{in} & = \I r(A_1 \cos \phi + B_1 \sin \phi), \\[2mm]
 \ut_1^{ex} & =\I\bet \cdot \br + \I a^2\re\left[(C_1 + \I D_1) \left(\zeta(z)-\frac{2\eta_1}{\tau_1}\,x
 +\frac{2\I \etat}{\tau_2}\,y\right)\right].
 \label{u1_ex_appr}
 \end{align}
 where real constants $A_1,B_1, C_1, D_1$ are given below. Then $\|u_1-\ut_1\|_{H^1}\leq Ca^5$.
 \end{lemma}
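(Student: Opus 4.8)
The plan is to verify the claimed bound by inserting the ansatz \rf{u1_in_appr}--\rf{u1_ex_appr} into the defining problem for $u_1$ (equations \rf{u_1}, \rf{bc1uk}, \rf{bc2uk}, \rf{u1b} with $\lambda_1=0$ by Theorem \ref{theorem1}), computing the residuals, and applying the a priori estimates of Lemma \ref{lemma3} and Lemma \ref{lemma4} to the difference $v=u_1-\ut_1$. First I would record that $\ut_1^{in}$ is a harmonic polynomial of degree one and $\ut_1^{ex}$ is, by \rf{period}, a sum of a harmonic periodic function $\I\bet\cdot\br$ and a term built from $\zeta(z)-\frac{2\eta_1}{\tau_1}x+\frac{2\I\etat}{\tau_2}y$, which is periodic and harmonic; hence $\ut_1$ satisfies \rf{u_1} exactly (with $\lambda_1=0$) and the Floquet--Bloch jump $\rrbracket\ut_1\llbracket=0$ exactly — so $v$ has no interior forcing and no quasi-periodicity defect. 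The only residuals live on $r=a$: the jumps $\llbracket\ut_1\rrbracket=h(\phi)$ and $\llbracket\frac1\ep\frac{\de\ut_1}{\de n}\rrbracket=g(\phi)$ on the cylinder. The constants $A_1,B_1,C_1,D_1$ (supplied right after the lemma) are precisely chosen so that, after expanding $\zeta(z)$ near the origin as $\frac1z - (g_2/60)z^3 - \dots$ — equivalently using $\zeta(z)=\frac1z+\sum_{k\ge1}c_k z^{2k+1}$ with $c_k$ polynomial in the lattice invariants — the leading angular harmonics ($\cos\phi,\sin\phi$ at order $a$, coming from $\frac{a^2}{z}$ evaluated on $|z|=a$) in $h$ and $g$ cancel exactly, leaving $h,g=O(a^4)$ pointwise in $\phi$.

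Next I would make the order-counting precise. On $r=a$ the exterior field $\ut_1^{ex}$ contributes $\I\bet\cdot\br = \I a(\cos\theta\cos\phi+\sin\theta\sin\phi)$ plus $\I a^2\re[(C_1+\I D_1)(\frac1z - \frac{2\eta_1}{\tau_1}x+\frac{2\I\etat}{\tau_2}y + O(z^3))]$; the $\frac{a^2}{z}$ piece on $|z|=a$ is $a\,\re[(C_1+\I D_1)\E^{-\I\phi}]$, a pure first harmonic, while the linear-in-$x,y$ correction is $O(a^2)$ and of first harmonic too, and the remainder is $O(a^3)$ (first- and third-harmonic). Matching the first-harmonic coefficients in \rf{bc1uk} and \rf{bc2uk} against $\ut_1^{in}=\I a(A_1\cos\phi+B_1\sin\phi)$ fixes $A_1,B_1,C_1,D_1$ and kills all contributions through order $a^3$; what survives is $h(\phi)=O(a^4)$, $g(\phi)=O(a^4)$, and — crucially — these residuals are trigonometric polynomials of bounded degree (a finite sum of harmonics $\E^{\I k\phi}$ with $|k|$ small, since $\zeta$'s Laurent tail only introduces harmonics $e^{\pm3i\phi},e^{\pm5i\phi},\dots$), so the Sobolev-type quantities $\int_0^{2\pi}(g^2+(g'')^2)\,d\phi$ and $\int_0^{2\pi}(h^2+(h'')^2)\,d\phi$ appearing in Lemma \ref{lemma4} are $O(a^8)$, i.e. their square roots are $O(a^4)$.

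Then I would assemble the estimate: Lemma \ref{lemma4} (applied to $v$ with $v(0)=0$, which we may arrange since $v$ is defined up to a constant and we can pin it down at the origin; the zero-mean normalization \rf{intzero1}, \rf{intzero} then follows or is handled separately) gives
\[
\|\nabla v\|_{L_2} \le c_1\Bigl(\int_0^{2\pi}(g^2+(g'')^2)\,\D\phi\Bigr)^{1/2} + \frac{c_2}{a}\Bigl(\int_0^{2\pi}(h^2+(h'')^2)\,\D\phi\Bigr)^{1/2} = O(a^4) + \frac{1}{a}O(a^4) = O(a^4),
\]
wait — that yields $O(a^4)$ for $\|\nabla v\|$, and then Lemma \ref{lemma0}/Lemma \ref{lemma3} upgrades this to $\|v\|_{H^1}\le C a^4$; to get the sharper $Ca^5$ one must observe that the $h$-residual actually enters at order $a^5$ (the function-continuity defect is one order smaller because the next harmonics generated are $e^{\pm3i\phi}$ scaled by an extra factor of $a$ from the Laurent tail $c_1 z^3$ evaluated at $|z|=a$ together with the $a^2$ prefactor, giving $a^2\cdot a^3=a^5$, and similarly the normal-derivative defect $g$ is $O(a^5)$ after the $1/a$ in Lemma \ref{lemma4} is accounted for, since $\frac1a\cdot a^5\cdot\sqrt{2\pi}\sim a^4$... ). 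The delicate bookkeeping of exactly which power of $a$ each residual harmonic carries — and in particular checking that the $\frac{1}{a}$ loss in the $h$-term of Lemma \ref{lemma4} is compensated — is the main obstacle; I expect one must track the first \emph{two} correction orders in the Laurent expansion of $\zeta$ and verify that the choice of $A_1,B_1,C_1,D_1$ annihilates residuals through $O(a^3)$ in $g$ and through $O(a^4)$ in $h$, so that $\|v\|_{H^1}\le Ca^5$ emerges cleanly. Everything else (harmonicity of $\ut_1$, periodicity via \rf{period}, reduction to Lemmas \ref{lemma3}--\ref{lemma4}) is routine.
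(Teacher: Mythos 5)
Your overall strategy (substitute the ansatz, reduce to residuals on $r=a$, invoke Lemmas \ref{lemma3}--\ref{lemma4}) is the right starting point and coincides with the paper's first step, but your own order-counting already exposes the gap, and you never resolve it. After the first harmonics are matched exactly by the choice of $A_1,B_1,C_1,D_1$, the surviving interface residuals come from the third-harmonic term $-s_4z^3$ in the Laurent tail of $\zeta$: the function jump is $h=O(a^2\cdot a^3)=O(a^5)$, but the normal-derivative jump loses a power of $a$ under differentiation, so $g=O(a^2\cdot a^2)=O(a^4)$. Lemma \ref{lemma4} then gives $c_1\,O(a^4)+\frac{c_2}{a}\,O(a^5)=O(a^4)$ --- one power short of the claim, no matter how many terms of the expansion of $\zeta$ you track. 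This is not a bookkeeping subtlety that a more careful choice of the same four constants can repair; the paper says explicitly that Lemma \ref{lemma4} applied directly to $u_1-\ut_1$ ``does not allow us to justify'' the desired accuracy.

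The missing idea is an intermediate, more accurate approximant. The paper introduces $\vt_1$, obtained from $\ut_1$ by adding $\I r^3(A_2\cos3\phi+B_2\sin3\phi)$ inside and $\I a^4\re\left[(C_2+\I D_2)\,\zeta''(z)\right]$ outside, with $A_2,B_2=O(a^2)$ and $C_2,D_2=O(a^4)$ chosen to annihilate the third harmonics as well. The interface residuals of $\vt_1$ are then $O(a^7)$ and $O(a^6)$, so Lemma \ref{lemma4} yields $\|u_1-\vt_1\|_{H^1}=O(a^6)$, comfortably better than needed. The point is that $\|\ut_1-\vt_1\|_{H^1}$ is then estimated \emph{directly}, as the explicit $H^1$ norm of the added corrector terms (for instance $\|a^8|\nabla\zeta''|^2\|$ integrated over $r>a$ gives $a^8\cdot a^{-3}=a^5$), with no $1/a$ loss from Lemma \ref{lemma4}; this is where the $O(a^5)$ comes from. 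The triangle inequality then gives $\|u_1-\ut_1\|_{H^1}\leq Ca^5$. Without this detour --- or some equivalent device separating ``distance from a high-order corrector'' from ``size of the corrector terms themselves'' --- your argument proves only $\|u_1-\ut_1\|_{H^1}\leq Ca^4$.
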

 \begin{proof}
Let us substitute $\ut_1$ into \rf{u_1}, \rf{bc1uk}-\rf{bc2uk} (with $k=1$) and  \rf{u1b}.
Functions \rf{u1_in_appr}, \rf{u1_ex_appr} are harmonic, i.e., \rf{u_1} holds for $\ut_1$.
Due to \rf{period} property  \rf{u1b} is satisfied for $\ut_1^{ex}$ and its normal derivatives.

To satisfy conditions \rf{bc1uk}-\rf{bc2uk} on the boundary $r=a$ we expand $\zeta(z)$
in a Laurent series
\begin{align}
\zeta (z) = \frac{1}{z} - \sum_{k=2}^\infty s_{2k} z^{2k-1},
 \label{zeta_ser}
\end{align}
where $s_{2k}$ are real lattice sums
\begin{equation}
 s_{2k} = {\sum_{m,n}}^\prime \frac{1}{\P^{2k}}, \quad k=2,3,\ldots.
 \label{s}
\end{equation}
We substitute \rf{zeta_ser} into \rf{u1_ex_appr} and equate the coefficients of $\cos \phi$ and
$\sin\phi$ in \rf{bc1uk}-\rf{bc2uk}.  This leads to
\begin{alignat}{4}
   \label{A0}
  &A_1 &&= \frac{2\ei}{\ei +1}\left(1 + \frac{2\alpha \eta_1}{\tau_1}\,a^2 \right)^{-1} \cos \te, \quad
  && C_1 &&= \alpha \left(1 + \frac{2\alpha \eta_1}{\tau_1}\,a^2 \right)^{-1} \cos \te, \\[2mm]
  &B_1 &&= \frac{2\ei}{\ei +1}\left(1 + \frac{2\alpha \etat}{\tau_2}\,a^2 \right)^{-1} \sin \te, \quad
  && D_1 &&= \alpha \left(1 + \frac{2\alpha \etat}{\tau_2}\,a^2 \right)^{-1} \sin \te,
  \label{D0}
\end{alignat}
where $\alpha = \dfrac{\ei -1}{\ei + 1}$.
Hence, approximation \rf{u1_in_appr}-\rf{u1_ex_appr} satisfies exactly \rf{u_1} and \rf{u1b}. Conditions \rf{bc1uk}-\rf{bc2uk}
are satisfied exactly only for the terms containing $\cos \phi$ and $\sin \phi$ but have an error in the terms
$\cos n\phi, \sin n\phi$ with $n \geq 3$. This error is large, and
Lemma \ref{lemma4} does not allow us to justify that \rf{u1_in_appr}-\rf{u1_ex_appr} approximates $u_1$ with the desired accuracy. Therefore we will add
an extra term to $\widetilde{u}_1$, but later it will be shown that this extra term can be omitted. Hence, let $\widetilde{v}_1=\widetilde{v}_1^{in}, r<a,~\widetilde{v}_1=\widetilde{v}_1^{ex}, r>a$, where
\begin{align}
 \label{u1_in_apprV}
\vt_1 ^{in}& = \I r(A_1 \cos \phi + B_1 \sin \phi) + \I r^3(A_2 \cos 3\phi + B_2 \sin 3\phi), \\[2mm]
 \vt_1^{ex}  & =\I\bet \cdot \br + \I a^2\re\left[(C_1 + \I D_1) \left(\zeta(z)-\frac{2\eta_1}{\tau_1}\,x
 +\frac{2\I \etat}{\tau_2}\,y\right)\right]
 + \I a^4 \re \left[(C_2 + \I D_2)\, \zeta^{\prime\prime}(z)\right] \nonu \\[2mm]
 & = \I\bet \cdot \br + \I a^2 \re \left[(C_1 + \I D_1) \left( \frac{1}{z} - s_4 z^3 -\frac{2\eta_1}{\tau_1}\,x
 +\frac{2\I \etat}{\tau_2}\,y + O(z^5)\right)\right] \nonu \\[2mm]
 &+ \I a^4 \re \left[(C_2 + \I D_2)\,\left( \frac{2}{z^3} +O(z) \right) \right],
 \label{u1_ex_apprV}
\end{align}
where $A_1,B_1,C_1,D_1$ remain the same. The function above is still harmonic.
Since derivatives of zeta-function are periodic, their addition to $\ut_1^{ex}$ does not violate
\rf{u1b}. We substitute \rf{u1_in_apprV}, \rf{u1_ex_apprV} into \rf{bc1uk}-\rf{bc2uk} and equate coefficients
of $\cos 3\phi$ and $\sin 3\phi$. This gives
\begin{alignat}{4}
   \label{A2}
  & A_2 &&= -\frac{2 \alpha a^2\ei s_4}{\ei + 1}\, \left(1 + \frac{2\alpha \eta_1}{\tau_1}\,a^2 \right)^{-1} \cos \te, \quad
  && C_2 &&= -\frac{1}{2}\,\alpha^2 a^4 s_4 \left(1 + \frac{2\alpha \eta_1}{\tau_1}\,a^2 \right)^{-1} \cos \te, \\[2mm]
  & B_2 &&= \frac{2 \alpha a^2\ei s_4}{\ei + 1}\, \left(1 + \frac{2\alpha \etat}{\tau_2}\,a^2 \right)^{-1} \sin \te, \quad
  &&D_2 &&= \frac{1}{2}\,\alpha^2 a^4 s_4  \left(1 + \frac{2\alpha \etat}{\tau_2}\,a^2 \right)^{-1} \sin \te.
  \label{D2}
\end{alignat}
From \rf{u1_in_apprV}-\rf{D2} it follows that $\vt_1$ satisfies \rf{bc1uk}, \rf{bc2uk} with the
accuracy $O(a^7)$ and $O(a^6)$, respectively. Thus Lemma \ref{lemma4} implies
that $\|u_1-\widetilde{v}_1\|_{H^1}\leq Ca^7$. One can easily check that
$\|\widetilde{u}_1-\widetilde{v}_1\|_{H^1} =O(a^5)$.

\end{proof}

\subsection{Approximation of $u_2$}

Function $u_2$ is an odd one and does not contribute to the average electric field. However, it appears in \rf{u3b} and in \rf{nu2}. Because
of that we will determine $u_2$ to the order
$O\left( a^2 \right)$. In the equation $u_2$ satisfies
\begin{equation}
 -\frac{1}{\ep}\, \Delta u_2 = \lambda_2, \quad r\neq a,
 \label{u2}
\end{equation}
one must know $\lambda_2$ in order to find $u_2$. We will find it with an  accuracy higher than that for $u_2$
since $\lambda_2$ is involved not only in \rf{u2} but also in the dispersion relation \rf{nu_ser}.
\begin{lemma}\label{lambda2}
The following relation is valid for $\lambda_2$:
\[
\lambda_2=1 - \frac{2\pi\alpha a^2}{\tau_1 \tau_2} + O\left( a^4 \right).
\]
\end{lemma}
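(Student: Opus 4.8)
The plan is to extract $\lambda_2$ from the solvability (Fredholm) condition for the equation satisfied by $u_2$, rather than by solving for $u_2$ itself. Recall that $u_2$ satisfies $-\frac{1}{\ep}\Delta u_2 = \lambda_2$ for $r\neq a$ (using $\lambda_1=0$ from Theorem~\ref{theorem1}), together with the transmission conditions \rf{bc1uk}--\rf{bc2uk} with $k=2$ and the quasiperiodicity jump \rf{u2b}, namely $\rrbracket u_2 \llbracket = \rrbracket (\I\bet\cdot\br)u_1 - \frac12 (\I\bet\cdot\br)^2 \llbracket$. First I would integrate the equation over $\s_{in}\cup\s_{ex}$ and apply Green's identity on each piece. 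On the inner disk boundary the flux contributions cancel by \rf{bc2uk}; on $\ds$ the quasiperiodic boundary terms produce the jump of $\frac{1}{\ep}\partial_n u_2$ and of $u_2$ across opposite sides, which are governed by \rf{u2b}. Assembling these, $\lambda_2 |\s|$ equals a boundary integral over $\ds$ that can be evaluated using only $u_1$ — more precisely, using the approximation $\vt_1$ (equivalently $\ut_1$) from Lemma~\ref{appu1}, since the error there is $O(a^5)$ in $H^1$, which is far smaller than the $O(a^4)$ accuracy we seek.

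Concretely, I expect the identity to take the schematic form
\begin{equation}
\lambda_2 \, \tau_1\tau_2 \;=\; \int_{\ds} \Big( \tfrac{1}{\ep}\partial_n u_2 \;-\; (\text{explicit terms built from } u_1 \text{ and } \bet\cdot\br)\Big)\, \D s,
\label{planidentity}
\end{equation}
where the explicit terms come from differentiating the jump condition \rf{u2b}; the $u_2$ flux term over $\ds$ will vanish or telescope because of the genuine periodicity of $u_2$ on $\ds$ after the Bloch factor is removed, leaving a fully computable expression. The leading contribution $\lambda_2 = 1 + O(a^2)$ corresponds to the homogenization limit $a\to 0$, where $u_1 = \I\bet\cdot\br$ exactly on the (empty) cell and the dispersion relation degenerates to $\nu^2 = q^2$. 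The $O(a^2)$ correction $-\frac{2\pi\alpha a^2}{\tau_1\tau_2}$ then arises from the dipole term $\I a^2\re[(C_1+\I D_1)(\zeta(z)-\cdots)]$ in $\ut_1^{ex}$: the coefficients $C_1,D_1$ from \rf{A0}--\rf{D0} satisfy $C_1 = \alpha\cos\te + O(a^2)$, $D_1 = \alpha\sin\te + O(a^2)$, and $C_1\cos\te + D_1\sin\te = \alpha + O(a^2)$, so the angular average over $\te$-dependence collapses to the isotropic constant $\alpha$, and the residue/flux of $\zeta$ through the cell boundary contributes the $2\pi$ and the $1/(\tau_1\tau_2)$. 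I would also cross-check via the alternative route: plug $\ut_1$ into the numerator and denominator of \rf{nu2} truncated at order $q^2$ (where $\nu^2/q^2 \approx \int \frac{1}{\ep}|\nabla u_1|^2 / \int 1 = \lambda_2$ to this order), giving the same answer through the Dirichlet energy of $\ut_1$; the energy integral splits into the inclusion part ($O(a^2)$) and the exterior part, and the cross-term linear-times-$\zeta$ again produces $-\frac{2\pi\alpha a^2}{\tau_1\tau_2}$ after using $\int_{\ds}$ of the periodic part of $\zeta$.

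The main obstacle will be bookkeeping the boundary terms on $\ds$ correctly: the jump condition \rf{u2b} mixes $u_1$ and $(\bet\cdot\br)^2$, and one must carefully track which pieces are genuinely periodic (hence drop out) versus which contribute, and do the analogous accounting for the \emph{normal-derivative} jumps implicit in $\rrbracket\cdot\llbracket$ (which, per the text, includes first derivatives). A secondary subtlety is justifying that replacing $u_1$ by $\ut_1$ (or $\vt_1$) in these boundary integrals costs only $O(a^4)$ and not more: since the integrals are over $\ds$ (away from $r=a$) and involve at most first derivatives of $u_1$, the $H^1(\s)$ bound $\|u_1-\ut_1\|_{H^1}\le Ca^5$ together with a trace/elliptic-regularity estimate away from the inclusion gives $O(a^5)$ on $\ds$, comfortably inside the claimed $O(a^4)$. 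Once these reductions are in place the computation is the residue calculus for $\zeta$ on the rectangular cell, which is standard.
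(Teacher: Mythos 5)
Your primary route --- extracting $\lambda_2$ from the Fredholm/solvability condition for the $u_2$ problem --- is sound and genuinely different from the paper's. The paper instead starts from the Rayleigh-quotient identity \rf{nu2} (a consequence of Green's formula \rf{green}), writes $\lambda_2=\frac{1}{|\s|}\int_{\s}\frac{1}{\ep}|\nabla u_1|^2\,\D S$, and computes the Dirichlet energy of $\ut_1$ directly; this is exactly your ``cross-check'' route, and it is heavier than you suggest: besides the cross term, one must evaluate $\int_{\se}|\zeta'(z)|^2\,\D S=\pi/a^2+O(1)$ (which multiplied by $a^4$ contributes at order $a^2$), invoke the Legendre relation $\eta_1\tau_2+\etat\tau_1=\pi$, and watch the identical cancellation of $-\pi a^2+\pi\alpha^2a^2+\frac{4\pi a^2\ei}{(\ei+1)^2}$ between the interior and exterior energies. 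Your solvability route buys a cleaner computation: writing $w=(\I\bet\cdot\br)u_1-\frac12(\I\bet\cdot\br)^2$ so that $u_2-w$ is genuinely periodic, one gets $\lambda_2|\s|=-\oint_{\ds}\de_n w\,\D s=-\int_{\s}\Delta w\,\D S-\oint_{\dsi}\llbracket\de_n w\rrbracket\,\D s$, and the three pieces are $-|\s|$, $+2|\s|$ (from $-2\I\int_{\s}\bet\cdot\nabla u_1$, computed exactly by periodicity), and $-(\tfrac{1}{\ei}-1)\oint_{\dsi}(\I\bet\cdot\br)\,\de_n u_1^{in}\,\D s=-2\pi\alpha a^2+O(a^4)$ using only $A_1,B_1$ from \rf{A0}--\rf{D0}. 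The one bookkeeping correction to your sketch: your schematic identity \rf{planidentity} puts everything on $\ds$, but if you route the computation through $\Delta w$ in the bulk, the $-2\pi\alpha a^2$ term actually arises from the jump of $\de_n u_1$ across the \emph{inclusion} boundary $r=a$ (equivalently, from the dipole flux through $\dsi$), not from ``the residue of $\zeta$ through the cell boundary''; the two bookkeepings agree by Green's theorem, but the $\dsi$ version avoids the quasi-period constants entirely. Also note that on this route the relevant trace of $u_1$ is at $r=a$, not on $\ds$, so the justification for replacing $u_1$ by $\ut_1$ should be phrased via Lemma \ref{appu1} near the inclusion rather than ``away from $r=a$''; the $O(a^5)$ bound still leaves you comfortably within the $O(a^4)$ budget.
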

\begin{proof}
It follows from \rf{nu_ser} and \rf{nu2} that
\begin{align}
 \lambda_2 = \frac{1}{S}\int_{\s} \frac{1}{\ep}\, \left| \nabla u_1 \right|^2 \D S
 = \frac{1}{S} \left(\frac{1}{\ei}\int_{\si} \left| \nabla \ui_1 \right|^2 \D S
 + \int_{\se} \left| \nabla \ue_1 \right|^2 \D S\right).
 \label{lam2}
\end{align}
Lemma \ref{appu1} yields
\begin{align}
 \frac{1}{\ei}\int_{\si} \left| \nabla \ui_1 \right|^2 \D S &=
 \frac{1}{\ei}\int_{\si} \left| \nabla \ut_1^{in} \right|^2 \D S + O\left( a^7 \right) =
 \frac{1}{\ei}\, \pi a^2 \left(A_1^2 + B_1^2 \right) + O\left( a^7 \right) \nonu \\[2mm]
 & = \frac{4\pi a^2 \ei}{(\ei +1)^2}+ O\left( a^4 \right).
\end{align}
From \rf{u1_ex_appr}, \rf{A0}-\rf{D0} we have
\begin{align}
  \label{gu1e}
  &\int_{\se} \left| \nabla \ue_1 \right|^2  \D S = \int_{\se} \left(
  \Bigl( \cos \te - \frac{2a^2 \eta_1}{\tau_1}\,C_1 \Bigr)^2
  + \Bigl( \sin \te - \frac{2a^2 \etat}{\tau_2}\,D_1 \Bigr)^2 \right. \nonu \\[2mm]
  & +2a^2 \Bigl( \cos \te - \frac{2a^2 \eta_1}{\tau_1}\,C_1 \Bigr) \re \left[ (C_1 + \I D_1) \zeta^\prime (z) \right]
   - 2a^2 \Bigl( \sin \te - \frac{2a^2 \etat}{\tau_2}\,D_1 \Bigr) \im \left[ (C_1 + \I D_1) \zeta^\prime (z) \right] \nonu \\[2mm]
  & +a^4 \left(|C_1|^2 + |D_1|^2 \right) |\zeta^\prime (z)|^2 \biggr)\, \D S + O\left( a^5 \right).
\end{align}
To evaluate the integral containing the derivatives
of zeta-function we use Green's theorem along with the quasiperiodicity properties
\rf{quasi1}-\rf{quasi2}:
\begin{align}
 \label{int_zeta_prime_bar}
 \int_{\se} \zeta^\prime (z)\, \D S = \frac{\I}{2} \oint_{\dse} \zeta(z)\,\D \zb
 = \eta_1 \tau_2 - \etat \tau_1 - \frac{\I}{2} \oint_{\dsi} \zeta(z)\,\D \zb = \eta_1 \tau_2 - \etat \tau_1,
\end{align}
where the last integral vanished due to expansion \rf{zeta_ser}. Finally, we need to evaluate
the integral of $|\zeta^\prime (z)|^2$. The integral of the regular part of $\zeta^\prime (z)$ is bounded in $a$. Therefore we have
\begin{align}
 \int_{\se} |\zeta^\prime (z)|^2 \, \D S &= \int_{\se} \frac{1}{r^4}\, r\D r \D \phi + O\left( 1 \right)
 = \int_{r>a} \int_{0}^{2\pi}\frac{\D r \,\D \phi}{r^3} + O\left( 1\right) = \frac{\pi}{a^2} + O\left( 1\right).
\end{align}
Now using the Legendre relation \cite{Whittaker:1927} which in our case reads
$\eta_1 \tau_2 + \etat \tau_1 = \pi$ we obtain
\begin{align}
 \int_{\se}\left| \nabla \ue_1 \right|^2 \D S &= \tau_1 \tau_2 \left(1 - 4\alpha a^2 \left(\frac{\eta_1}{\tau_1}\cos^2 \te
 + \frac{\etat}{\tau_2} \sin^2 \te \right) \right) - \pi a^2 \nonu \\[2mm]
 &+ 2\alpha a^2 (\eta_1 \tau_2 -\etat \tau_1) (\cos^2 \te - \sin^2 \te) + \pi \alpha^2 a^2 + O\left( a^4 \right) \nonu \\[2mm]
 &= \tau_1 \tau_2 - 2\pi\alpha a^2 -\pi a^2 + \pi \alpha^2 a^2 + O\left( a^4 \right).
\end{align}
Finally substituting all terms in \rf{lam2} we have
\begin{align}
 \label{lambda_2}
 \lambda_2 &= 1 - \frac{2\pi\alpha a^2}{\tau_1\tau_2} + \frac{1}{\tau_1 \tau_2}
 \left( \frac{4\pi a^2 \ei}{(\ei +1)^2}- \pi a^2 + \pi \alpha^2 a^2 \right) 
  = 1 - \frac{2\pi\alpha a^2}{\tau_1 \tau_2} + O\left( a^4 \right).
\end{align}
\end{proof}
\begin{lemma}\label{appu2}
The following approximation for $u_2$ is valid in the space $H^1$:
\begin{align}\label{uu2}
{\rm If} \quad \widetilde{u}_2 = \left\{
 \begin{array}{ll}
  \dfrac{\ei}{2} \left(\I\bet \cdot \br \right)^2, & r<a, \\[2mm]
  \dfrac{1}{2} \left(\I\bet \cdot \br \right)^2, & r>a,
 \end{array}
 \right. \quad {\rm then} \quad \|u_2-\widetilde{u}_2\|_{H^1}\leq Ca^2.
\end{align}
\end{lemma}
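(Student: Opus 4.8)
The plan is to compare $u_2$ with $\widetilde u_2$ by setting $w=u_2-\widetilde u_2$, identifying the boundary value problem $w$ solves, and checking that all its data are $O(a^2)$ so that the a priori estimates of Lemmas~\ref{lemma3}--\ref{lemma4} apply. First I would record what $\widetilde u_2$ satisfies exactly. Since $\Delta(\I\bet\cdot\br)^2=-2$, one has $-\frac1\ep\Delta\widetilde u_2=1$ for $r\neq a$, so by Lemma~\ref{lambda2} the residual in the equation is $-\frac1\ep\Delta w=\lambda_2-1=O(a^2)$, a constant, hence $O(a^2)$ in $L_2(\s)$. The factor $\ei$ in $\widetilde u_2^{\,in}$ is exactly compensated by the weight $1/\ep$ in \rf{bc2uk}, so $\left\llbracket\frac1\ep\,\de_n\widetilde u_2\right\rrbracket=0$ identically and the flux jump of $w$ vanishes. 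The Dirichlet jump is $\left\llbracket\widetilde u_2\right\rrbracket=\frac{\ei-1}{2}(\I\bet\cdot\br)^2\big|_{r=a}=-\frac{(\ei-1)a^2}{4}\bigl(1+\cos 2(\phi-\te)\bigr)$, which is $O(a^2)$ and, crucially, contains only the angular harmonics $n=0$ and $n=2$. Finally, on $\de\s$ one writes $u_1=\I\bet\cdot\br+g$ with $g$ periodic and $\|g\|_{C^1(\de\s)}=O(a^2)$ (from Lemma~\ref{appu1}, formula \rf{u1_ex_appr}, plus interior elliptic regularity away from the disk); then the right‑hand side of \rf{u2b} equals $\left\rrbracket\frac12(\I\bet\cdot\br)^2+(\I\bet\cdot\br)g\right\llbracket$, so the $\de\s$ discrepancy of $w$ is $\rrbracket(\I\bet\cdot\br)g\llbracket=O(a^2)$ in value and first derivatives (the $O(1)$ term $\rrbracket\frac12(\I\bet\cdot\br)^2\llbracket$ is reproduced exactly by $\widetilde u_2^{\,ex}$).

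The main obstacle is that these bounds alone do not reach $O(a^2)$: Lemma~\ref{lemma4} loses a factor $1/a$ on a Dirichlet jump at $r=a$, so the $O(a^2)$ jump there yields only $\|\nabla w\|_{L_2}=O(a)$. I would remove this exactly as in the treatment of $u_1$, by first correcting $\widetilde u_2$ to a function $\widetilde v_2$ with explicit harmonic terms that kill the $r=a$ Dirichlet jump. Because $(\I\bet\cdot\br)^2|_{r=a}$ has no harmonics beyond $n=2$, two corrections suffice: for $n=2$, a matching pair $\propto r^2\cos 2(\phi-\te)$ inside the disk and $\propto\re[\,(\cdot)\,\zeta'(z)]$ outside (periodic, $\sim r^{-2}\cos 2(\phi-\te)$ near the origin), with coefficients tuned — exactly as in the derivation of \rf{A2}--\rf{D2} — to reproduce the Dirichlet jump while keeping the flux jump zero; for $n=0$, a function equal to one constant inside the disk and another outside, which is automatically periodic and has zero flux jump. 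One may also absorb the constant residual by adding a term $\propto(\lambda_2-1)\ep\,r^2$ (or simply note the problem for $u_2-\widetilde v_2$ is solvable since both $u_2$ and $\widetilde v_2$ exist). Each correction has $H^1$-norm $O(a^2)$, so $\|\widetilde u_2-\widetilde v_2\|_{H^1}=O(a^2)$.

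After these corrections $u_2-\widetilde v_2$ satisfies the homogeneous equation and homogeneous conditions on $r=a$, with only an $O(a^2)$ discrepancy left in the $\de\s$ condition. Since $\de\s$ lies at distance $O(1)$ from the disk, subtracting a smooth lifting of that data costs no inverse power of $a$; Lemmas~\ref{lemma3}--\ref{lemma4} then give $\|u_2-\widetilde v_2\|_{H^1}\leq Ca^2$, and the triangle inequality yields $\|u_2-\widetilde u_2\|_{H^1}\leq Ca^2$.

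One point must be settled along the way: $\int_\s\widetilde u_2\,\D S$ is an $O(1)$ quantity, whereas $\int_\s u_2\,\D S=0$ by \rf{intzero}, so the Poincar\'e-type estimates of Section~4 apply only to the mean-zero representatives; concretely one compares $u_2$ with $\widetilde u_2-|\s|^{-1}\int_\s\widetilde u_2\,\D S$. The additive constant is immaterial for the subsequent use of $\widetilde u_2$, since it enters \rf{nu2} only through $2\re u_2$, whose mean vanishes, and through $\nabla u_2$.
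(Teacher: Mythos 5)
Your proposal is correct, and it is in fact more careful than the paper's own argument at the one point where care is genuinely needed. The paper follows the same overall scheme you use: peel off the $\de\s$ discrepancy $\rrbracket(\I\bet\cdot\br)h_1\llbracket$ with a cutoff supported near $\de\s$ (your ``smooth lifting''), then estimate the remainder by Lemma~\ref{lemma3}. However, it then asserts that $\widetilde u_2$ satisfies the homogeneous matching conditions on $r=a$ (``the same relations with $f_2=0$ are valid for $\widetilde u_2$''), whereas, as you observe, $\left\llbracket\widetilde u_2\right\rrbracket=\frac{\ei-1}{2}\,(\I\bet\cdot\br)^2\big|_{r=a}$ is a nonzero $O(a^2)$ quantity; Lemma~\ref{lemma3} therefore does not apply directly, and feeding this jump into Lemma~\ref{lemma4} costs the factor $c_2/a$ and yields only $O(a)$. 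Your additional step --- an explicit harmonic corrector with $r^2\cos2(\phi-\te)$ inside, $\re[\gamma\,\zeta'(z)]$ outside (periodic, $\sim -\gamma r^{-2}\cos2(\phi-\te)$ at the origin), plus a piecewise constant for the $n=0$ harmonic, each of $H^1$-norm $O(a^2)$, exactly mirroring the passage from $\ut_1$ to $\vt_1$ in Lemma~\ref{appu1} --- is precisely what is needed to close this gap, and your accounting (two nonzero harmonics, two matching conditions each, coefficients $O(1)$ inside and $O(a^4)$ outside) checks out. Your closing remark on normalization is also a genuine point the paper passes over: $\int_\s\widetilde u_2\,\D S=O(1)$ while \rf{intzero} forces $\int_\s u_2\,\D S=0$, so the stated $L_2$ bound can only hold after subtracting the mean of $\widetilde u_2$ (equivalently, the lemma is really about $\|\nabla(u_2-\widetilde u_2)\|_{L_2}$); one should then verify, as you begin to, that this $O(1)$ constant is harmless wherever $u_2$ is used downstream --- it drops out of $\nabla u_2$ and of $\int 2\re u_2$, but it does enter the boundary data \rf{u3b} for $u_3$ and merits an explicit remark there.
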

\begin{proof}
Denote a $\sigma$-neighborhood of $\ds$ by $(\ds)_\sigma$. We fix $\sigma$ in such a way that $r>a$ in $(\ds)_\sigma$.
From Lemmas \ref{appu1}, \ref{lambda2} it follows that $u_2$ is a solution of the problem
\begin{align}\nonu
 -\frac{1}{\ep}\, \Delta u_2 &=1+ O\left( a^2 \right), \quad \br \in \s, \quad r\neq a, \\[2mm]
 \left. \left\llbracket u_2 (\br) \right\rrbracket \right . &=0, \quad
\left. \left\llbracket \frac{1}{\ep} \frac{\de u_2 (\br)}{\de n} \right\rrbracket \right. =0, \nonu \\[2mm]
 \rrbracket u_2 (\br) \llbracket &=  \left\rrbracket (\I\bet \cdot \br)u_1 - \frac{1}{2}\,(\I\bet \cdot \br)^2\right\llbracket,
 \nonu
 \end{align}
 where $u_1$ has the following form in $(\ds)_\sigma$:
 \begin{equation}\label{u1h1}
 u_1=(\I\bet \cdot \br)+h_1, \quad \|h_1\|_{H^1((\ds)_\sigma)}\leq Ca^2.
 \end{equation}

 Let $\eta=\eta(\br)\in C^\infty, ~\eta(\br)=1 $ in $(\ds)_{\sigma/2},~\eta(\br)=0$ in $S\backslash (\ds)_\sigma$. Then
 \[
 \|\eta(\br)(\I\bet \cdot \br)h_1\|_{H^1(S)}\leq Ca^2,
\]
 and therefore it is enough  to prove estimate \rf{uu2} for $v_2-\widetilde{u}_2$ where $v_2=u_2-\eta(\br)(\I\bet \cdot \br)h_1$.
 Obviously, $v_2$ satisfies the relations
 \begin{align}\nonu
 -\frac{1}{\ep}\, \Delta v_2 &=1+ f_2, \quad \br \in \s, \quad r\neq a, \\[2mm]
 \left. \left\llbracket v_2 (\br) \right\rrbracket \right . &=0, \quad
\left. \left\llbracket \frac{1}{\ep} \frac{\de v_2 (\br)}{\de n} \right\rrbracket \right. =0, \nonu \\[2mm]
 \rrbracket v_2 (\br) \llbracket &=  \left\rrbracket  \frac{1}{2}\,(\I\bet \cdot \br)^2 \right\llbracket,
 \nonu
 \end{align}
 where $f_2=O\left( a^2 \right)+\Delta[\eta(\br)(\I\bet \cdot \br)h_1]$. The same relations with $f_2=0$ are valid for $\widetilde{u}_2$. Hence,
 Lemma \ref{lemma3} provides the estimate \rf{uu2} for $v_2-\widetilde{u}_2$ if $\|f_2\|_{L_2}\leq Ca^2$. The latter inequality follows from \rf{u1h1}. Indeed,
  $\Delta u_1=\Delta (\I\bet \cdot \br)=0$. Thus $\Delta h_1=0$, and therefore,
  \[
  \Delta[\eta(\br)(\I\bet \cdot \br)h_1]=\Delta[\eta(\br)(\I\bet \cdot \br)]h_1+2\left\langle\nabla[\eta(\br)(\I\bet \cdot \br)],\nabla h_1\right\rangle.
  \]
  This and \rf{u1h1} imply the estimate on $f_2$ and complete the proof of the lemma.
\end{proof}

\subsection{Approximation of $u_3$}
From Lemma \ref{appu1} and \rf{zeta_ser} it follows that $\|u_1-(\I\bet \cdot \br)\|_{L_2}\leq Ca\ln \frac{1}{a}$. Together with Lemma \ref{lambda2}, it allow us to rewrite problem \rf{u_3}, \rf{bc1uk}, \rf{bc2uk}, \rf{u3b} for $u_3$ in the form
\begin{align}
  \label{u3exp}
 -\frac{1}{\ep}\, \Delta u_3  &= (\I\bet \cdot \br)+f, \quad r\neq a, \quad \|f\|_{L_2}\leq Ca\ln \frac{1}{a}
 \\[2mm]
\label{u3a}
 \llbracket u_3 (\br)  \rrbracket  &= 0,
 \quad \left\llbracket \frac{1}{\ep} \frac{\de u_3(\br)}{\de n} \right\rrbracket = 0, \\[2mm]
 \rrbracket u_3(\br) \llbracket &=  \left\rrbracket  (\I\bet \cdot \br)u_2 - \frac{1}{2}\,(\I\bet \cdot \br)^2 u_1
 + \frac{1}{6}\,(\I\bet \cdot \br)^3  \right\llbracket.
 \label{u3con}
\end{align}
Similar to the previous case we formulate
\begin{lemma}
 \label{appu3}
 \begin{align}
 \label{ut3}
 If \quad \ut_3 = \left\{
 \begin{array}{ll}
  \dfrac{\ei}{6} \left(\I\bet \cdot \br \right)^3, & r<a, \\[2mm]
  \dfrac{1}{6} \left(\I\bet \cdot \br \right)^3, & r>a,
 \end{array}
 \right. \quad then \quad \|u_3-\ut_3\|_{H^1}\leq Ca \ln \frac{1}{a}.
\end{align}
\end{lemma}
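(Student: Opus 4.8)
The plan is to follow the strategy of Lemma~\ref{appu2}: verify that $\ut_3$ solves the boundary value problem \rf{u3exp}--\rf{u3con} for $u_3$ up to residuals of size $O(a\ln\frac1a)$ in the equation and $O(a^3)$ in the interface condition at $r=a$, and then convert these residuals into an $H^1$ bound via Lemmas~\ref{lemma3} and~\ref{lemma4}. First I would check the algebra for $\ut_3$. Since $\I\bet\cdot\br$ is linear one has $\Delta(\I\bet\cdot\br)^3=-6(\I\bet\cdot\br)$, so $-\frac1\ep\Delta\ut_3=\I\bet\cdot\br$ \emph{identically} in both $\si$ and $\se$, which matches the leading term of \rf{u3exp}. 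The prefactors $\ei$ and $1$ in $\ut_3^{in},\ut_3^{ex}$ cancel against $\frac1\ep$ in the normal derivative, so $\left\llbracket\frac1\ep\frac{\de\ut_3}{\de n}\right\rrbracket=0$ exactly; only $\llbracket\ut_3\rrbracket=\frac{\ei-1}{6}(\I\bet\cdot\br)^3\big|_{r=a}=O(a^3)$ is nonzero. Across $\partial\s$, $\ut_3^{ex}=\frac16(\I\bet\cdot\br)^3$ gives $\rrbracket\ut_3\llbracket=\rrbracket\frac16(\I\bet\cdot\br)^3\llbracket$.

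Next I would reduce the cell-boundary condition \rf{u3con}. In the neighborhood $(\ds)_\sigma\subset\{r>a\}$ of $\partial\s$, Lemma~\ref{appu1} (in the form \rf{u1h1}) and Lemma~\ref{appu2} give $u_1=\I\bet\cdot\br+h_1$ and $u_2=\frac12(\I\bet\cdot\br)^2+h_2$ with $\|h_j\|_{H^1((\ds)_\sigma)}\leq Ca^2$; moreover $\Delta h_1=0$ and, by Lemma~\ref{lambda2}, $\Delta h_2=1-\lambda_2=O(a^2)$ there. Substituting into the right-hand side of \rf{u3con} and collecting the polynomial part yields $\rrbracket u_3\llbracket=\rrbracket\frac16(\I\bet\cdot\br)^3+c\llbracket$ with $c=(\I\bet\cdot\br)h_2-\frac12(\I\bet\cdot\br)^2h_1$. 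As in Lemma~\ref{appu2}, I would take an even cutoff $\eta\in C^\infty$ equal to $1$ near $\partial\s$ and supported in $(\ds)_\sigma$, and set $v_3=u_3-\eta\,c$. Then $v_3$ has exactly the cell-boundary jump $\rrbracket\frac16(\I\bet\cdot\br)^3\llbracket$ of $\ut_3$; it keeps the interface conditions of $u_3$ at $r=a$ (the correction vanishes there); it is odd and mean-zero with $v_3(0)=0$; and $-\frac1\ep\Delta v_3=\I\bet\cdot\br+\tilde f_3$ for a residual $\tilde f_3$.

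The main obstacle is estimating $\tilde f_3$ in $L_2$: a priori $\Delta$ of an $H^1$ function lies only in $H^{-1}$, so structure must be used. On $(\ds)_{\sigma/2}$, where $\eta\equiv1$, the Leibniz rule gives $\tilde f_3=f+\Delta c$ with $\Delta c=(\I\bet\cdot\br)\Delta h_2+2\I\bet\cdot\nabla h_2-\nabla\big((\I\bet\cdot\br)^2\big)\cdot\nabla h_1+h_1$, a sum of bounded smooth coefficients times factors that are $O(a^2)$ in $L_2$ --- here $\Delta h_1=0$ annihilates the one genuinely dangerous term and $\|\Delta h_2\|_{L_2}=O(a^2)$; in the transition region the extra terms $\nabla\eta\cdot\nabla c$ and $c\,\Delta\eta$ are likewise $O(a^2)$ in $L_2$, while outside $(\ds)_\sigma$ one simply has $\tilde f_3=f$. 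Hence $\|\tilde f_3\|_{L_2}\leq\|f\|_{L_2}+O(a^2)\leq Ca\ln\frac1a$ by \rf{u3exp}. Finally I would split $v_3-\ut_3$ into a part driven by $\tilde f_3$, estimated by Lemma~\ref{lemma3} with contribution $O(a\ln\frac1a)$, and a part driven by the residual interface jump $-\llbracket\ut_3\rrbracket=O(a^3)$, estimated by Lemma~\ref{lemma4} with contribution $\frac1a\cdot O(a^3)=O(a^2)$; the Poincar\'e inequality (Lemma~\ref{lemma0}), applicable since $v_3-\ut_3$ is odd and hence mean-zero, upgrades the gradient bound to $\|v_3-\ut_3\|_{H^1}\leq Ca\ln\frac1a$. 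Since $\|u_3-v_3\|_{H^1}=\|\eta\,c\|_{H^1}=O(a^2)$, the triangle inequality gives $\|u_3-\ut_3\|_{H^1}\leq Ca\ln\frac1a$. The logarithm is intrinsic: it enters through $\|u_1-(\I\bet\cdot\br)\|_{L_2}=O(a\ln\frac1a)$, produced by the $1/z$ term of $\zeta$ in \rf{zeta_ser}, and is carried into the equation for $u_3$ by the term $\lambda_2u_1$.
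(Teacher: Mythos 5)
Your proof is correct and follows essentially the same route as the paper: the same correction $g_3=\eta\left((\I\bet\cdot\br)h_2-\tfrac12(\I\bet\cdot\br)^2h_1\right)$, the same use of $\Delta h_1=0$ and $\Delta h_2=1-\lambda_2=O(a^2)$ to control $\Delta g_3$ in $L_2$, and the same reduction to Lemma~\ref{lemma3}. You are in fact slightly more careful than the paper in two spots it glosses over --- the $O(a^3)$ jump of $\ut_3$ across $r=a$ (handled via Lemma~\ref{lemma4}) and the mean-zero/oddness condition needed to invoke the Poincar\'e inequality.
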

\begin{proof}
We will use notation $(\ds)_\sigma$ and $\eta (\br)$ from the previous Lemma.
We need to single out the main therm (as $a \to 0$) of the right-hand side of \rf{u3con}.
Lemmas \ref{appu1} and \ref{appu2} imply
\begin{align}\label{u3h1}
 u_1 &= \I\bet \cdot \br + h_1, \quad \|h_1\|_{H^1((\ds)_\sigma)}\leq C_1 a^2, \\[2mm]
 u_2 &= \frac{1}{2}\, (\I\bet \cdot \br)^2 + h_2, \quad \|h_2\|_{H^1((\ds)_\sigma)}\leq C_2 a^2.
 \label{u3h2}
 \end{align}

We introduce function $v_3 = u_3 - g_3, ~g_3 = \eta(\br)\left((\I\bet \cdot \br)h_2
- \frac{1}{2}\,(\I\bet \cdot \br)^2 h_1 \right)$. This function satisfies the relations
\begin{align}\nonu
 -\frac{1}{\ep}\, \Delta v_3 &= \I\bet \cdot \br+ f_3, \quad \br \in \s, \quad r\neq a, \\[2mm]
 \left. \left\llbracket v_3 (\br) \right\rrbracket \right . &=0, \quad
\left. \left\llbracket \frac{1}{\ep} \frac{\de v_3 (\br)}{\de n} \right\rrbracket \right. =0, \nonu \\[2mm]
 \rrbracket v_3 (\br) \llbracket &=  \left\rrbracket  \frac{1}{6}\,(\I\bet \cdot \br)^3\right\llbracket,
 \nonumber
 \end{align}
 where $f_3=O\left( a\ln a \right)+ \Delta g_3,$ and the last relation above (for the jump of $v_3$ on $\ds$) is $a$-independent.
 From \rf{u3h1},\rf{u3h2} it follow that $\|g_3\|_{H^1} \leq Ca^2$, and therefore one can prove Lemma \ref{appu3} for $v_3$ instead of $u_3$.
 We note that $\ut_3$ satisfies the same relations as those for $v_3$ with $f_3=0$. Thus,  estimate \rf{ut3} for $v_3 - \ut_3$ will follow
 from Lemma \ref{lemma3} if we show that $\|f_3\|_{L_2} \leq C a\ln \frac{1}{a}$. Thus, to complete the proof of the Lemma it suffices to show that
 $\|\Delta g_3\|_{L_2} \leq Ca^2$.

 From equations \rf{u_1},\rf{u_2} (where $\lambda_1 = 0$) and \rf{u3h1}, \rf{u3h2} it follows that
 $\Delta h_1 =0$,  $\Delta h_2 = 1 - \lambda_2 = O\left( a^2 \right)$. Hence
\begin{align}
 &\Delta g_3 =
 h_2 \,\Delta\left(\eta(\br)\left(\I\bet \cdot \br \right)\right)
 +2\left\langle\nabla[\eta(\br)(\I\bet \cdot \br)],\nabla h_2\right\rangle
 + \eta(\br)\left(\I\bet \cdot \br \right)O(a^2) \nonu \\[2mm]
 &-\frac{1}{2}\,h_1  \,\Delta\left(\eta(\br)\left(\I\bet \cdot \br \right)^2\right)
-\left\langle\nabla[\eta(\br)(\I\bet \cdot \br)^2],\nabla h_1\right\rangle.
\end{align}
Now the desired estimate on $\Delta g_3$ follows from \rf{u3h1},\rf{u3h2}.
\end{proof}

\section{Effective dielectric tensor and the dispersion relation}
\setcounter{equation}{0}

Let us recall that the two-dimensional electric component of the TE-mode is determined by
$\BE = \dfrac{\I c}{\om \ep}\, \nabla \times \BH = \dfrac{\I c}{\om \ep} \,[u_y, -u_x]$ and that $\BD = \ep \BE.$
The effective dielectric tensor $\bep^\ast$ of the problem relates the average electric field
and the average electric displacement over the fundamental cell
\begin{equation}
 \BDA = \bep^\ast \BEA.
\end{equation}
In the principal axes $ \bep^\ast$ has a diagonal form and can be found from the relation
\begin{equation}
\int_{\s} \np u\, \D S = \bep^\ast \int_{\s} \frac{1}{\ep} \,\np u\, \D S, \quad \np u = [u_y, -u_x].
\label{bep_def}
\end{equation}
We represent $\bep^\ast$ in the form
\begin{equation}
 \bep^\ast = \BI + \bepo,
 \label{bepo}
\end{equation}
where $\BI$ is a $2\times 2$ identity matrix and $
 {\bepo} = \left[
 \begin{array}{cc}
  \epo_1 & 0 \\
  0  & \epo_2
 \end{array}
 \right].
$
Substituting \rf{bepo} into \rf{bep_def} we obtain equation for $\bepo$
\begin{equation}
 \bepo \int_{\s} \frac{1}{\ep} \,\np u\, \D S =  \left(1 - \frac{1}{\ei}\right) \int_{\si}  \np u\, \D S.
 \label{bepo1}
\end{equation}
Observe that in the right-hand side of \rf{bepo1} integration is performed only over $\si$.
With expansion \rf{u_ser}, Theorem \ref{theorem1} and Theorem \ref{theorem2} on the uniform convergence of $u$ (see Appendix A) we obtain for the entries of $\bepo$
\begin{align}
 \label{epo1}
 \epo_1 \int_{\s} \dfrac{1}{\ep}\, \de_y \left( u_1 + q^2 u_3 \right) \D S  &= \frac{\ei - 1}{\ei}\, \int_{\si} \de_y \left(
 \ui_1 + q^2 \ui_3 \right) \D S +  O\left ( \left(q^2 + a^2 \right)^{\frac{5}{2}}\right), \\[2mm]
 \label{epo2}
 \epo_2 \int_{\s} \dfrac{1}{\ep}\, \de_x \left( u_1 + q^2 u_3 \right) \D S  &= \frac{\ei - 1}{\ei}\, \int_{\si} \de_x \left(
 \ui_1 + q^2 \ui_3 \right) \D S +  O\left ( \left(q^2 + a^2 \right)^{\frac{5}{2}}\right),
\end{align}
Now using \rf{A0}-\rf{D0},  we evaluate the integrals involved in \rf{epo1}-\rf{epo2}
\begin{align}
 \int_{\si}  \de_x \ui_1\, \D S &= \pi \I a^2 A_1 = \frac{2\pi \I \ei a^2}{\ei + 1}\left( 1 - \frac{2\alpha \eta_1}{\tau_1}\,a^2 \right) \cos \te
 + O\left (  a^6 \right), \\[2mm]
 \int_{\si}  \de_y \ui_1\, \D S &= \pi \I a^2 B_1 = \frac{2\pi \I \ei a^2}{\ei + 1}\left( 1 - \frac{2\alpha \etat}{\tau_2}\,a^2 \right) \sin \te
 + O\left (  a^6 \right), \\[2mm]
 \int_{\si}  \de_x \ui_3\, \D S &= -\frac{\I \ei}{2} \cos \te \int_{\si} (x\cos \te + y\sin \te)^2 \, \D S = -\frac{\pi \I \ei a^4}{8} \cos \te
 + O\left (  a^5 \ln \frac{1}{a} \right), \\[2mm]
 \int_{\si}  \de_y \ui_3\, \D S &= -\frac{\I \ei}{2} \sin \te \int_{\si} (x\cos \te + y\sin \te)^2 \, \D S = -\frac{\pi \I \ei a^4}{8} \sin \te
 + O\left (  a^5 \ln \frac{1}{a} \right).
\end{align}
Integrals over $\se = \s \smallsetminus \si$ are evaluated using Green's theorem
\begin{align}
 &\int_{\se} \np \ue_1\, \D S
 = -\oint_{\ds} \left[ \ue_{1} \,\D x, \ue_{1}\,\D y\right]
 + \oint_{\dsi} \left[ \ui_{1} \,\D x, \ui_{1}\,\D y\right].
\end{align}
Integrals over the boundary $\ds$ are evaluated by the property \rf{u1b} of $u_1$
\begin{align}
 \oint_{\ds} \ue_{1} \,\D x &= \int_{-\tau_1/2}^{\tau_1/2} \ue_1 \left(x, -\frac{\tau_2}{2} \right) \D x -
 \int_{-\tau_1/2}^{\tau_1/2} \ue_1 \left(x, \frac{\tau_2}{2} \right) \D x = -\I \tau_1 \tau_2 \sin \te, \\[2mm]
 \oint_{\ds} \ue_{1} \,\D y &= \int_{-\tau_2/2}^{\tau_2/2} \ue_1 \left(\frac{\tau_1}{2},y \right) \D y -
 \int_{-\tau_2/2}^{\tau_2/2} \ue_1 \left(-\frac{\tau_1}{2},y \right) \D y = \I \tau_1 \tau_2 \cos \te.
\end{align}
Second integral is evaluated using \rf{u1_in_appr}, \rf{A0}, and \rf{D0}
\begin{align}
 \oint_{\dsi} \ui_{1} \,\D x &= -\I a^2\int_0^{2\pi} \left(A_1 \cos \phi + B_1 \sin \phi \right) \sin \phi \,\D \phi
 = -\pi \I a^2 B_1  \nonu \\[2mm]
 &= -\frac{2\pi \I a^2 \ei}{\ei + 1} \sin \te \left(1 - \frac{2\alpha a^2 \etat}{\tau_2} + O\left(a^6\right)\right).
 \end{align}
Similarly,
 \begin{align}
 \oint_{\dsi} \ui_{1} \,\D y &=
 \frac{2\pi \I a^2 \ei}{\ei + 1} \cos \te \left(1 - \frac{2\alpha a^2 \eta_1}{\tau_1}
 +  O\left(a^6\right)\right).
\end{align}
Finally from \rf{ut3} we estimate integrals of $u_3$ to the order $O\left( a^4 \right)$
\begin{align}
 \int_{\s} \de_x u_3 \, \D S &= -\frac{\I}{2}\, \cos \te \int_{\s} (x\cos \te + y \sin \te)^2\, \D x \D y
 = -\frac{\I \tau_1 \tau_2}{24}\, \cos \te \left(\tau_1^2 \cos^2 \te + \tau_2^2 \sin^2 \te \right), \\[2mm]
 \int_{\s} \de_y u_3 \, \D S &= -\frac{\I}{2}\, \sin \te \int_{\s} (x\cos \te + y \sin \te)^2\, \D x \D y
 = -\frac{\I \tau_1 \tau_2}{24}\, \sin \te \left(\tau_1^2 \cos^2 \te + \tau_2^2 \sin^2 \te \right).
\end{align}
Substituting evaluated integrals into \rf{epo1}-\rf{epo2} we obtain components of the effective tensor with the accuracy
$O\left( \left(q^2 + a^2\right)^{\frac{5}{2}} \right)$
\begin{align}
 \ei_1^\ast &= 1 + \frac{2\pi \alpha a^2 \left(1 - \frac{2\alpha a^2 \etat}{\tau_2}\right)}
 {\tau_1 \tau_2 - 2\pi \alpha a^2 \left(1 - \frac{2\alpha a^2 \etat}{\tau_2}\right)} +\frac{q^2}{12}\,
 \frac{\pi \alpha a^2 \tau_1 \tau_2 \left(1 - \frac{2\alpha a^2 \etat}{\tau_2}\right)\left(\tau_1^2 \cos^2 \te + \tau_2^2 \sin^2 \te \right)}
 {\left(\tau_1 \tau_2 - 2\pi \alpha a^2 \left(1 - \frac{2\alpha a^2 \etat}{\tau_2}\right)\right)^2}, \\[2mm]
 \ei_2^\ast &= 1 + \frac{2\pi \alpha a^2 \left(1 - \frac{2\alpha a^2 \eta_1}{\tau_1}\right)}
 {\tau_1 \tau_2 - 2\pi \alpha a^2 \left(1 - \frac{2\alpha a^2 \eta_1}{\tau_1}\right)} +\frac{q^2}{12}\,
 \frac{\pi \alpha a^2 \tau_1 \tau_2 \left(1 - \frac{2\alpha a^2 \eta_1}{\tau_1}\right)\left(\tau_1^2 \cos^2 \te + \tau_2^2 \sin^2 \te \right)}
 {\left(\tau_1 \tau_2 - 2\pi \alpha a^2 \left(1 - \frac{2\alpha a^2 \eta_1}{\tau_1}\right)\right)^2}.
\end{align}
With Legendre's relation  $\eta_1 \tau_2 + \etat \tau_1 = \pi$
the series expansion of $\bep^\ast$ gives
\begin{align}
 \bep^\ast &= \left(1 + \frac{2\pi \alpha a^2}{\tau_1 \tau_2}
 +\frac{1}{12}\, \frac{\pi \alpha a^2 q^2}{\tau_1 \tau_2} \left( \tau_1^2 \cos^2 \te + \tau_2^2 \sin^2 \te \right) \right) \BI \nonu \\[2mm]
 &+ \frac{4\pi \alpha^2 a^4}{\tau_1^2 \tau_2^2 } \left[
\begin{array}{cc}
 \eta_1 \tau_2 & 0 \\[1mm]
 0 & \etat \tau_1
\end{array}
\right]+ O\left ( \left(q^2 + a^2\right)^{\frac{5}{2}} \right).
\end{align}

For the square lattice $\tau_1 = \tau_2 = \tau$ we have $\eta_1 = \etat = \dfrac{\pi}{2\tau}$ \cite{Abramowitz:1964}.
Then in our approximation $\bep^\ast$ becomes isotropic $\bep^\ast = \ei^\ast \BI$, where
\begin{equation}
\label{eps}
 \ei^\ast = 1 + 2\alpha \,\frac{\pi a^2}{\tau^2} + 2\alpha^2 \left(\frac{\pi a^2}{\tau^2} \right)^2
 + \frac{1}{12}\,\pi \alpha a^2 q^2 + O\left ( \left(q^2 + a^2\right)^{\frac{5}{2}} \right).
\end{equation}
 It should be remarked that while the static part of \rf{eps} agrees with the expansion of Maxwell's
 formula the frequency-dependent correction differs substantially from that obtained in \cite{McPhedran:1996}.

Going back to evaluation of \rf{nu2} one can easily check that
\begin{align}
 |\nabla u_2|^2 + 2\re\left( \nabla u_1 \cdot \nabla \bar{u}_3 \right)&= O \left(a^2 \right), \\[2mm]
 |u_1|^2 + 2\re u_2 &= O\left( a^2 \right).
\end{align}
Then from \rf{lambda_2} and \rf{nu2} we obtain the dispersion relation
\begin{align}
\nu^2 = q^2 \left(1 - \frac{2\alpha \pi a^2}{\tau_1 \tau_2} \right) + O\left( q^2 (q^2 + a^2)^2 \right).
\label{dr}
\end{align}
Thus the first correction in the dispersion relation does not depend of the shape of the lattice but only on
the concentration of the scatterers.

\section{Conclusion}
We have considered the problem of transverse propagation of electromagnetic waves through a doubly periodic
rectangular array of circular dielectric cylinders of radius $a$. Solution of the problem is sought in the form of a power
series in terms of the magnitude $q$ of the quasimomentum of the Bloch wave. We prove that the eigenfunction and the eigenvalue are
analytic functions of $q^2$ that converge uniformly in $a$. We find explicitly frequency correction terms to the effective
dielectric tensor as well as to the dispersion relation and rigorously estimate the remainders.
The approach devised in the paper can also be used to find higher order terms of the effective tensor
and the dispersion relation.

%
%
%
%
%
%



\section*{Appendix A. Uniform property of the series expansion}
\setcounter{equation}{0}
\setcounter{section}{1}
\renewcommand{\theequation}{\Alph{section}.\arabic{equation}}

The following theorem shows that the series expansion of the eigenfunction $u$ is uniform in $a$.
\begin{theorem}
 \label{theorem2}
 Let $a \leq a_0 \leq \min (\tau_1, \tau_2)$. Then there are constants $q_0, \lambda_0 > 0$ such that the eigenvalue problem \rf{Hz1}-\rf{bc1a}
 has
 a unique eigenvalue $\lambda = \nu^2$ when $q \leq q_0, |\lambda| \leq \lambda_0$, and the eigenvalue is simple.
 The corresponding eigenfunction $u=u(\br, \Be, a)$ normalized by the condition
 \begin{equation}
 \label{norm}
   \int_{S}u(\br, \Be)\,dS=|\s|=\tau_1\tau_2
 \end{equation}
 is analytic in $q$
 \begin{align}
  u = 1 + \sum_{n=1}^\infty u_n (\br, a)\, q^n, \quad q \leq q_0, \quad \|u_n\|_{H^1} \leq C_n,
 \end{align}
 where $C_n$ do not depend on $a$ and the series converges in $H^1 (\s)$ uniformly in $a$. The corresponding eigenvalue
 $\lambda$ can also be expanded in a power series in $q, q \leq q_0$, which converges uniformly in $a, a \leq a_0$.
\end{theorem}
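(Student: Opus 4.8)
The idea is to view the Bloch problem \rf{Hz1}--\rf{bc1a} as a (quadratic polynomial) analytic perturbation in $q$ of a self-adjoint operator whose eigenvalue $0$ is simple and isolated, and to apply Kato's analytic perturbation theory; the only real work is to check that every constant the theory produces (the size of the spectral gap, the relative bounds of the perturbation, hence the radius $q_0$ and the Cauchy bounds on the Taylor coefficients) can be taken independent of $a\leq a_0$. First I would pass to periodic functions via $u=\E^{\I q\bet\cdot\br}w$, which turns \rf{Hz1}--\rf{bc1a} into $A_q w=\lambda w$ on $\s$, where $A_q$ is generated by the Hermitian form
\[
 b_q[w,\psi]=\int_{\s}\frac{1}{\ep}\,(\nabla+\I q\bet)w\cdot\overline{(\nabla+\I q\bet)\psi}\,\D S,\qquad w,\psi\in H^1_{\mathrm{per}}(\s),
\]
together with the transmission conditions \rf{bc1uk}--\rf{bc2uk} on $r=a$. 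Expanding the product gives $b_q=b_0+qb_1+q^2b_2$ with $b_2[w,\psi]=\int_{\s}\frac1\ep w\bar\psi\,\D S$ and $b_1$ a first-order Hermitian form, so $A_q=A_0+qA_1+q^2A_2$ on the fixed domain $D(A_0)$ — a holomorphic family of Kato's type (A) that is self-adjoint in $L_2(\s)$ for real $q$; here $A_0=-\frac1\ep\Delta$ with the transmission conditions and periodicity, $A_2=1/\ep$ is bounded, and $A_1$ is first order in $\nabla$.

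Two facts make everything uniform in $a$. Since $\ep\in\{1,\ei\}$ with $\ei>0$ fixed, $c_0:=\min(1,1/\ei)\leq 1/\ep\leq\max(1,1/\ei)=:c_0'$ with $c_0,c_0'$ independent of $a$; hence $b_0[w,w]\geq c_0\|\nabla w\|_{L_2}^2$, so $\ker A_0$ consists of constants, $0$ is a simple eigenvalue, and, $A_0$ having compact resolvent, the rest of $\sigma(A_0)$ is bounded below by $\mu_1\geq c_0/C_1=:2\lambda_0>0$, with $C_1$ the ($a$-independent) Poincar\'{e} constant of Lemma \ref{lemma0}. Second, from $c_0\|\nabla w\|_{L_2}^2\leq b_0[w,w]=\la A_0w,w\ra\leq\|A_0w\|_{L_2}\|w\|_{L_2}$ and $\|A_1w\|_{L_2}\leq C\|\nabla w\|_{L_2}$ (with $C$ depending only on $\ei$) one sees that $A_1$, hence also $A_2$, is $A_0$-bounded with relative bound $0$, all constants independent of $a$; equivalently $qb_1+q^2b_2$ is relatively $b_0$-form-bounded with a bound that tends to $0$ as $q\to0$ uniformly in $a$.

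Given these, I would run the Riesz-projection argument with explicit control: for $|\zeta|=\lambda_0$ one has $\mathrm{dist}(\zeta,\sigma(A_0))\geq\lambda_0$, $\|(A_0-\zeta)^{-1}\|_{L_2\to L_2}\leq\lambda_0^{-1}$, $\|A_0(A_0-\zeta)^{-1}\|\leq2$, so $\|(qA_1+q^2A_2)(A_0-\zeta)^{-1}\|\leq K|q|+K'|q|^2$ with $K,K'$ independent of $a$; picking $q_0>0$ (independent of $a$) so that this is $\leq\tfrac12$ on the circle, the Neumann series represents $(A_q-\zeta)^{-1}$ as a power series in $q$ convergent for $|q|\leq q_0$, uniformly in $\zeta$ on the circle and in $a$, whence the rank-one projection $P(q,a)=-\frac1{2\pi\I}\oint_{|\zeta|=\lambda_0}(A_q-\zeta)^{-1}\,\D\zeta$ is analytic in $q$ with radius $\geq q_0$ and $a$-uniformly bounded coefficients; moreover $(A_0-\zeta)^{-1}$ maps $L_2(\s)$ boundedly into $D(A_0)\subset H^1(\s)$ with an $a$-independent norm, so $P(q,a)$ is analytic as a map into $H^1(\s)$ as well. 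Hence, for $|q|\leq q_0$ and $|\lambda|\leq\lambda_0$, $A_q$ has exactly one eigenvalue $\lambda(q,a)$, it is simple and analytic in $q$, and its Taylor coefficients are bounded (Cauchy's estimate on $\{|q|=q_0\}$) uniformly in $a$. I would then set $\psi_q=P(q,a)1\in H^1(\s)$ (nonzero since $\psi_0=1$), normalize by $u(\br,\Be,a)=\tau_1\tau_2\bigl(\int_{\s}\E^{\I q\bet\cdot\br}\psi_q\,\D S\bigr)^{-1}\E^{\I q\bet\cdot\br}\psi_q$ — the denominator is analytic, tends to $\tau_1\tau_2$ as $q\to0$, hence is nonzero for $|q|\leq q_0$ after a final $a$-uniform shrinking of $q_0$ — obtaining an eigenfunction of the original problem, $H^1$-valued analytic in $q$, equal to $1$ at $q=0$, with $\int_\s u\,\D S=\tau_1\tau_2$ identically and Taylor coefficients $u_n(\br,a)$ obeying $\|u_n\|_{H^1}\leq C_n$, $C_n$ independent of $a$; the series $1+\sum_{n\geq1}u_nq^n$ then converges in $H^1(\s)$ uniformly in $a$. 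Differentiating the equation and the jump conditions in $q$ shows these $u_n$ solve the recurrence \rf{u_1}--\rf{ukb} with $\int_\s u_n\,\D S=0$, so by Lemma \ref{lemma2} they coincide with the coefficients in \rf{u_ser}--\rf{nu_ser} (only even $\lambda$'s surviving, by Theorem \ref{theorem1}).

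The analytic perturbation theory itself is routine; the point of the theorem — and the only place care is needed — is the \emph{uniformity in $a$}, which rests entirely on the two $a$-independent inputs above: the bound $1/\ep\geq c_0>0$ together with the Poincar\'{e} constant of Lemma \ref{lemma0}, which force an $a$-independent spectral gap $2\lambda_0$ of $A_0$ at $q=0$, and the fact that $qA_1+q^2A_2$ is first order in $\nabla$ and relatively $A_0$-bounded with relative bound $0$ uniformly in $a$. Once the gap and these relative bounds are known to be $a$-independent, the contour $|\zeta|=\lambda_0$, the radius $q_0$, and the Neumann/Cauchy estimates — hence the $C_n$ — all come out $a$-independent automatically. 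A minor additional technical point, already used above, is that one must track the resolvent as a bounded map into $H^1(\s)$, not merely $L_2(\s)$, in order to get $H^1$-convergence of the series for $u$; this is immediate from elliptic regularity since $(A_q-\zeta)^{-1}$ takes values in $D(A_0)\subset H^1(\s)$.
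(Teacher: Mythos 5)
Your overall strategy --- reduce to a $q$-independent setting, use the $a$-uniform Poincar\'{e} constant of Lemma \ref{lemma0} to get an $a$-independent spectral gap and an $a$-independent resolvent bound $L_2\to H^1$, and then run a Riesz-projection/Neumann-series argument with Cauchy estimates on the Taylor coefficients --- is exactly the paper's strategy. The difference, and the place where your write-up has a genuine flaw, is the reduction step. You substitute $u=\E^{\I q\bet\cdot\br}w$ globally and assert that $A_q=A_0+qA_1+q^2A_2$ is a holomorphic family of type (A) ``on the fixed domain $D(A_0)$''. That domain claim is false here precisely because $\ep$ jumps across $r=a$: the natural transmission condition inherited from your form $b_q$ is $\bigl\llbracket \tfrac{1}{\ep}\bigl(\partial_n w+\I q\,\bet\cdot\bm{n}\,w\bigr)\bigr\rrbracket=0$, which is $q$-dependent and, since $\llbracket 1/\ep\rrbracket\neq 0$, is not equivalent to $\llbracket\tfrac{1}{\ep}\partial_n w\rrbracket=0$. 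Consequently $D(A_q)\neq D(A_0)$ for $q\neq 0$, and the operator-level Neumann series you invoke, $(A_0-\zeta)^{-1}\sum_n(-T_q)^n$ with $T_q=(qA_1+q^2A_2)(A_0-\zeta)^{-1}$, produces functions satisfying the $q=0$ transmission condition; it is therefore not the resolvent of the operator associated with $b_q$, and the eigenpairs it yields do not transform back to solutions of \rf{Hz1}--\rf{bc1a}.

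The gap is repairable in two ways. You could carry out the argument consistently at the form level (your parenthetical remark that $qb_1+q^2b_2$ is $b_0$-form-bounded with relative bound tending to $0$ uniformly in $a$ is correct, and Kato's type (B) theory then gives analyticity of the resolvent and projection with the same $a$-uniform constants); but then the explicit operator Neumann series must be replaced by its form/resolvent-identity counterpart. The paper instead avoids the issue at the source: it conjugates by $\beta(\br)=1+(\E^{\I\Be\cdot\br}-1)\alpha(\br)$ where the cutoff $\alpha$ vanishes for $r\leq a_0$, so that $\beta\equiv 1$ in a neighborhood of the inclusion, the transmission conditions on $r=a$ for $v=u/\beta$ are exactly the $q$-independent ones, and the perturbation $B_q v=\tfrac{1}{\ep\beta}(2\nabla\beta\cdot\nabla v+v\Delta\beta)$ is a genuine first-order operator of size $O(q)$ on a truly fixed domain $\mathscr{D}(a)$ --- after which your Neumann-series computation is legitimate verbatim. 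Everything else in your proposal (the $a$-independent gap from Lemma \ref{lemma0}, the relative bound $0$, the $H^1$-valued resolvent estimate, the normalization, and the identification of the coefficients with those of \rf{u_ser}--\rf{nu_ser} via Lemma \ref{lemma2}) matches the paper and is sound.
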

\begin{proof}
 Let us reduce the problem \rf{Hz1}-\rf{bc1a} to an equivalent one where the domain of the operator does not depend on $q$.
 Let
 \begin{equation}
  \beta (\br) = 1 + \left(\E^{\I \Be \cdot \br} -1 \right) \alpha (\br),
 \end{equation}
 where $\alpha (\br)$ is a $C^{\infty}$ function whose graph is shown in Figure 2. The substitution $u = \beta v$ in \rf{Hz1}-\rf{bc1a}
 and multiplication of the equation by $\beta^{-1}$ reduces the problem to the following one

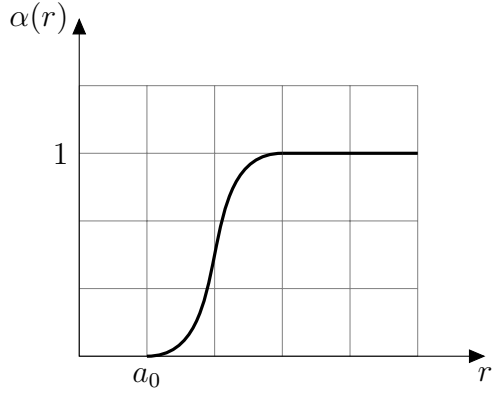
\begin{figure}[H]
\begin{center}
\begin{tikzpicture}[>=triangle 45,scale=0.90]

\draw [help lines] (0,0) grid (5,4);
\draw[<->] (6,0) node[below]{$r$} -- (0,0) --
(0,5) node[left]{$\alpha(r)$} ;

\draw[very thick] (1,0) node[below] {$a_0$} to [out=0,in=180] (3,3);
\draw[very thick] (3,3) -- (5,3);
\node [left] at (0,3) {$1$};

\end{tikzpicture}

\caption{Graph of the function $\alpha (r)$.}
\end{center}
\label{fig:2}
\end{figure}

 \begin{align}
  \left(-\frac{1}{\ep}\, \Delta + B_q \right) v = \lambda v, \quad
  B_q v = \frac{1}{\ep \beta} \left( 2\nabla \beta \cdot \nabla v + v \Delta \beta \right),
  \label{Bqv}
 \end{align}
 where $\lambda = \nu^2$ and the domain $\mathscr{D} = \mathscr{D}(a)$ of operators $\dst \frac{1}{\ep}\, \Delta $ and $B_q$ consists of functions
 $v \in H^2 \left( \si \right) \oplus  H^2 \left( \se \right)$ that satisfy $
 \left\llbracket v (\br) \right\rrbracket  =0,  \left\llbracket \dfrac{1}{\ep} \dfrac{\de v (\br)}{\de n} \right\rrbracket  =0$  and the periodicity
 condition $\rrbracket  v(\br) \llbracket =0$.

We will need the following lemma:
\begin{lemma}
 There exist constants $\gamma_1, \gamma_2 >0$ such that the operator $ \dst \frac{1}{\ep}\, \Delta  + B_q$ does not have eigenvalues
  in the annulus $\gamma_1 q \leq | \lambda| \leq \gamma_2$ when $a \leq a_0$, $0 < q \leq \gamma_2/ \gamma_1$
\end{lemma}
\begin{proof}
 We normalize the eigenfunction $v$ in \rf{v} by the condition $\| v\|_{L^2} = 1$. Thus
 \begin{align}
  \label{star}
  \sum_{m,n=0}^\infty |v_{nm}|^2 = \frac{1}{\tau_1 \tau_2}, \quad \text{where} \quad
  v = \sum_{m,n=0}^\infty v_{mn} \E ^{2\pi \I \left( \frac{mx}{\tau_1} + \frac{ny}{\tau_2} \right)}.
 \end{align}
Let us show that the coefficient $v_{00}$ cannot be very small. Clearly, $\dst \dfrac{1}{\ep \beta} \left(2|\nabla \beta | + |\Delta \beta| \right) \leq C q$
for small $q$. Thus,
\begin{align}
 \label{star2}
 \| B_q v \|_{L_2} \leq Cq \left(\| \nabla v \|_{L_2} + 1 \right).
\end{align}
From here and Green's formula applied to \rf{v} it follows that
\begin{align}
  \int_{\s} \frac{1}{\ep}\, \left| \nabla v \right|^2 \D S \leq
  \| B_q v \|_{L_2} \cdot \| v\|_{L_2} + |\lambda | \|v\|^2_{L_2} \leq
  Cq \left( \| \nabla v \|_{L_2} + 1 \right) + |\lambda|, \quad q \ll 1.
 \end{align}
 Hence,
 \begin{align}
 \label{star3}
  \| \nabla v\|^2_{L_2} \leq C_1 \left( |\lambda| + q \right) ~~ \text{for} ~~ q \ll 1,
 \end{align}
 i. e.
 \begin{align}
  \left( 2\pi \right)^2 \sum_{(m,n)\neq (0,0)} |v_{m,n} |^2
  \left[ \left( \frac{m}{\tau_1}\right)^2 + \left( \frac{n}{\tau_2}\right)^2\right] \tau_1 \tau_2
  \leq C_1 \left( |\lambda| + q \right) ~~ \text{for} ~~ q \ll 1,
 \end{align}
This and \rf{star} imply the existence of $C_0 > 0$ such that $|v_{00}| > C_0$ for small enough
$|\lambda| + q$.

Lemma \ref{lemma1} implies that a non-trivial solution of \rf{Bqv} exists only if
 \begin{equation}
  \int_{\s} \left( \lambda v - B_q v\right) \D S =0.
  \label{Bq}
 \end{equation}
 From \rf{star2} and \rf{star3} it follows that
 \begin{align}
  \int_{\s} |B_q v| \,\D S \leq C_2 \left[ q \sqrt{|\lambda| +q} +q \right] \leq C_3 q
 \end{align}
if $|\lambda| +q$ is small. Thus \rf{Bq} implies that $|\lambda | C_0 \leq C_3 q$ for small eigenvalues when $q \ll 1$, i.e.,
there exists $\gamma_2>0$ such that eigenvalues $\lambda$ in the circle $|\lambda|\leq\gamma_2$ are located only inside
of a smaller circle $|\lambda|<\gamma_1 q, \gamma_1=C_3/C_0,$ when $q$ is small enough.
\end{proof}

Continuing the proof of the theorem we assume below that $a \leq a_0$, $q \leq \gamma_2/\gamma_1$.
Thus, the circle
$\Gamma = \{\lambda : |\lambda| = \gamma_2 \}$ splits the spectrum of $\dst \frac{1}{\ep}\, \Delta + B_q$
into two parts. Since operator
$\dst \frac{1}{\ep}\, \Delta + B_q: \mathscr{D}(a) \to L_2$, where $\mathscr{D}(a)$ was
defined in \rf{Bqv}, has a discrete spectrum, operator
\begin{equation}
 P_q = \int_{\Gamma} \left( \frac{1}{\ep}\, \Delta + B_q -\lambda \right)^{-1} \D \lambda
\end{equation}
is a projection on the space spanned by the eigenfunctions of $\dst \frac{1}{\ep}\, \Delta + B_q$ with eigenvalues inside $\Gamma$. We will show below
that $\| P_q - P_0 \| < 1$ if $q$ and $\gamma_2$ are small enough. Hence \cite[sec.~XII.2]{RS4:1978},  the ranges of $P_q$ and $P_0$ have the same dimensions.
We reduce $\gamma_2$, if needed, to guarantee that $P_0$ is the projection on the simple eigenfunction $u=const$ of $\dst \frac{1}{\ep}\,\Delta$.
Then
$\dst \frac{1}{\ep}\, \Delta + B_q$ has a unique simple eigenvalue in $\Gamma$ when $q, \gamma_2 \ll 1$. The corresponding eigenfunction
is proportional to $P_q f$ with an arbitrary $f$ such that $P_q f \neq 0$. Function $f$ needs to be normalized to guarantee \rf{norm}.

It was shown above that $\dst \frac{1}{\ep}\, \Delta - \lambda$ is invertible when $\lambda \in \Gamma$,
i. e. $\dst \left( \frac{1}{\ep}\, \Delta - \lambda \right)^{-1}: L_2 \to H^1$ is bounded.
We need an estimate for this operator with a constant that does not depend on $a$.

Let
\begin{align}
 \left( \frac{1}{\ep}\, \Delta - \lambda \right)u = f \in L_2, \quad u \in \mathscr{D}(a), \quad a \leq a_0, \quad \lambda \in \Gamma.
\end{align}
Lemma \ref{lemma1} implies that
\begin{align}
 \int_{\s} \left( \lambda u + f\right) \D S =0,
 \label{orth}
\end{align}
and from Green's formula it follows that
\begin{align}
 \left \| \frac{1}{\ep}\, \nabla u \right \|^2_{L_2} - |\lambda | \|u \|^2_{L_2} \leq
 \int_{\s} |fu|\, \D S \leq \gamma_2 \|u \|^2_{L_2} + \frac{1}{\gamma_2} \|f \|^2_{L_2}.
\end{align}
Thus,
\begin{align}
\|\nabla u \|^2_{L_2} \leq \left( 2\gamma_2 \|u \|^2_{L_2} +\frac{1}{\gamma_2}\, \|f \|^2_{L_2} \right) \max \left(\ep \right).
\end{align}
Hence,
\begin{align}
  \sum_{(m,n)\neq (0,0)} |u_{m,n} |^2  \left[ \left( \frac{m}{\tau_1}\right)^2 + \left( \frac{n}{\tau_2}\right)^2\right]
  \leq C \gamma_2 \left[ 2\gamma_2 \sum_{(m,n)\neq (0,0)} |u_{m,n} |^2 + |u_{0,0}|^2 \right] +\frac{1}{\gamma_2}\, \| f\|^2_{L_2},
 \end{align}
where $u_{m,n}$ are Fourier coefficients of $u$. If $\gamma_2$ is small enough then the latter estimate and \rf{orth} imply
\begin{align}
  \sum_{(m,n)\neq (0,0)} |u_{m,n} |^2  \left[ \left( \frac{m}{\tau_1}\right)^2 + \left( \frac{n}{\tau_2}\right)^2\right]
  \leq C_1 \frac{|f_{00}|^2}{\gamma_2} + \frac{1}{\gamma_2}\,\|f \|^2_{L_2} \leq C_2 \| f\|^2_{L_2},
 \end{align}
From here and \rf{orth} it also follows that
\begin{align}
 \| u\|^2_{L_2} = \left[  \sum_{(m,n)\neq (0,0)} |u_{m,n} |^2  + |u_{0,0}|^2 \right] \tau_1 \tau_2 \leq C \| f\|^2.
\end{align}
Thus, $\left\| \left( \dfrac{1}{\ep}\, \Delta - \lambda \right)^{-1} f \right\|_{H^1} \leq C \| f \|_{L_2}$, where $C$ does not depend on $a$ and $\lambda \in \Gamma$.

Now we write
\begin{align}
  \frac{1}{\ep}\, \Delta + B_q - \lambda = (1 +T_q) \left( \frac{1}{\ep}\, \Delta - \lambda \right), \quad
  T_q = B_q \left( \frac{1}{\ep}\, \Delta - \lambda \right)^{-1}: L_2 \to L_2,
\end{align}
where operator $T_q$ is analytic in $q$, its power series converges in the norm space uniformly in $a$ and $\|T_q\| \to 0$ as $|q| \to 0$.
It remains to write $P_q$ in the form
\begin{equation}
 P_q = \int_{\Gamma} \left( \frac{1}{\ep}\, \Delta - \lambda \right)^{-1} \sum_{n=0}^\infty \left( -T_q \right)^n \D \lambda
\end{equation}
and expand $T_q$ in a power series in $q$.
This proves that $\| P_q - P_0 \| < 1$ if $q \ll 1$ and provides a power series for $P_q f$
which converges in $H^1 (\s)$ uniformly in $a$. Power expansion of $\lambda = \nu^2$ follows immediately from \rf{green}.

\end{proof}

%

\end{document}